\documentclass[submission,copyright,creativecommons]{eptcs}
\providecommand{\event}{AiML 2026} 

\usepackage{amsmath,amssymb,amsthm,amsfonts}
\usepackage{hyperref}
\usepackage{bussproofs}
\usepackage{mathrsfs}
\usepackage{enumerate}
\usepackage{stmaryrd}
\usepackage{tikz}
\usetikzlibrary{patterns,arrows}
\tikzset{shorten <>/.style={shorten >=#1, shorten <=#1}}
\usetikzlibrary{patterns,arrows}
\usetikzlibrary{automata, positioning}
\usepackage{setspace}

\newcommand{\ve}{\varnothing}
\newcommand{\powerset}[1]{{\mathcal{P}(#1)}}

\newcommand{\rsto}{\upharpoonright}
\newcommand{\sub}{\subseteq}
\newcommand{\set}[1]{{\{ #1 \}}}
\newcommand{\tup}[1]{{\langle #1 \rangle}}
\newcommand{\pair}[1]{{(#1)}}

\newcommand{\inver}[1]{{{#1}^{-1}}}

\newcommand{\ModelStyle}[1]{\mathfrak{#1}}
\newcommand{\mm}{\ModelStyle{M}}

\newcommand{\FrameStyle}[1]{\mathfrak{#1}}
\newcommand{\ff}{\FrameStyle{F}}

\newcommand{\GeneralFrameStyle}[1]{\mathbb{#1}}
\newcommand{\gf}{\GeneralFrameStyle{F}}

\newcommand{\ClassOfStructureStyle}[1]{\mathcal{#1}}
\newcommand{\CK}{\ClassOfStructureStyle{K}}

\newcommand{\FunctorStyle}[1]{\mathsf{#1}}

\newcommand{\Fin}{\FunctorStyle{Fin}}
\newcommand{\Fr}{\FunctorStyle{Fr}}

\newcommand{\GFr}{\FunctorStyle{GFr}}

\newcommand{\Log}{\FunctorStyle{Log}}

\newcommand{\NExt}{\FunctorStyle{NExt}}

\newcommand{\Prop}{\mathsf{Prop}}
\renewcommand{\phi}{\varphi}

\newsavebox{\necessity}
\sbox\necessity{%
\begin{tikzpicture}
\draw (0,0) rectangle (1.5ex,1.5ex);%
\end{tikzpicture}}
\newcommand{\B}{\operatorname{\usebox\necessity}}

\newsavebox{\possibility}
\sbox\possibility{%
\begin{tikzpicture}
  \draw[rotate=45] (0,0) rectangle(1.07ex,1.07ex);%
\end{tikzpicture}}
\newcommand{\D}{\operatorname{\usebox\possibility}}

\newsavebox{\BlackNecessity}
\sbox\BlackNecessity{%
\begin{tikzpicture}
\draw[fill=black] (0,0) rectangle(1.5ex,1.5ex);%
\end{tikzpicture}}
\newcommand{\bb}{\operatorname{\usebox\BlackNecessity}}

\newsavebox{\BlackPossibility}
\sbox\BlackPossibility{%
\begin{tikzpicture}
  \draw[rotate=45, fill=black] (0,0) rectangle(1.07ex,1.07ex);%
\end{tikzpicture}}
\newcommand{\bd}{\operatorname{\usebox\BlackPossibility}}

\newcommand{\SucPre}[1]{{#1}_\sharp}
\newcommand{\R}{\SucPre{R}}

\newcommand{\ML}[1]{\mathsf{#1}}
\newcommand{\TL}[1]{\mathsf{#1}_t}
\renewcommand{\L}{\mathscr{L}}

\newcommand{\Lt}{\mathscr{L}_t}
\newcommand{\md}{\models}

\newcommand{\M}{\mathsf{M}}

\newcommand{\K}{\mathsf{K}}
\usepackage{booktabs}
\usepackage{mathtools}

\renewcommand{\B}{\Box}
\renewcommand{\D}{\Diamond}
\renewcommand{\bd}{\blacklozenge}
\renewcommand{\bb}{\blacksquare}

\newcommand{\commentout}[1]{}

\newcommand{\is}[1]{{\Delta^{\leq #1}}}
\newcommand{\all}[1]{{\nabla^{\leq #1}}}
\usepackage{aiml26}

\usepackage{iftex}

\ifpdf
  \usepackage{underscore}         
  \usepackage[T1]{fontenc}        
\else
  \usepackage{breakurl}           
\fi

\title{Most Properties are Undecidable for Transitive Tense Logics}
\author{Qian Chen
\institute{
The Tsinghua-UvA JRC for Logic\\
Department of Philosophy\\
Tsinghua University\\
Beijing, China
}
\email{chenq21@mails.tsinghua.edu.cn}
\and
Tenyo Takahashi
\institute{Institute for Logic, Language and Computation\\
University of Amsterdam\\
Amsterdam, The Netherlands}
\email{t.takahashi@uva.nl}
}

\newcommand{\titlerunning}{Most Properties are Undecidable for Transitive Tense Logics}
\newcommand{\authorrunning}{Q. Chen \& T. Takahashi}

\hypersetup{
  bookmarksnumbered,
  pdftitle    = {\titlerunning},
  pdfauthor   = {\authorrunning},
  pdfsubject  = {EPTCS},               
  pdfkeywords = {modal logic, tense logic, decidability, logics' property, lattice of logics} 
}

\begin{document}
\maketitle

\begin{abstract}
    A logics' property is decidable in a class of logics if there exists an algorithm that decides whether a finitely axiomatizable logic in the class has the property. Many properties are undecidable for bimodal logics but decidable for linear tense logics, which leads to a general question on how the interactions of modalities affect the decidability of properties. In this paper, we study the decidability of properties for transitive tense logics and show that most properties are undecidable in the lattice $\NExt\TL{K4}$ of transitive tense logics, including Kripke completeness, the finite model property, and decidability. Our proof method adapts Chagrov’s approach of constructing a reduction from an undecidable problem of Minsky machines to the decision problem for logics' properties, yielding a general scheme of proving the undecidability of these properties.
\end{abstract}

\section{Introduction}

A central part of the study of modal logic is to determine whether a logic has a certain property, such as Kripke completeness, the finite model property (FMP), and decidability. From a global viewpoint, this gives rise to the following algorithmic problems: In a class of logics, is it decidable whether a logic has a certain property?\footnote{This should not be confused with the decidability of logics; a logic is \emph{decidable} if its membership problem is decidable.} More precisely, a logics' property (or simply, property) $P$ in a class $\mathcal{C}$ of logics is identified with a subclass of $\mathcal{C}$, and $P$ is \emph{decidable} if there exists an algorithm such that, for any finitely axiomatizable logic $L \in \mathcal{C}$, given by its finite axiomatization, the algorithm decides whether $L$ has the property $P$.\footnote{We have to restrict to finitely axiomatizable logics because an input of an algorithm must be a finite object; see Subsection~\ref{subsec2.2} and \cite[Section 17.1]{Chagrov.Zakharyaschev1997} for more discussions.} We refer to \cite[Section 17.6]{Chagrov.Zakharyaschev1997} and \cite{Wolter.Zakharyaschev2007} for historical accounts. See Table~\ref{table:UndecidableProperties} for a summary of the results discussed in this introduction.

Let $\K_n$ denote the least normal $n$-modal logic and $\K = \K_1$. For a normal modal logic $L$, let $\NExt L$ be the lattice of all normal extensions of $L$. In the unimodal case, many properties have been shown to be undecidable in $\NExt\K$. Thomason~\cite{Thomason1982} showed the undecidability of Kripke completeness, and a series of works by Chagrov and his co-authors \cite{chagrov1990I, chagrov1990II, Chagrov.Zakharyaschev1993, Chagrov.Chagrova1995, Chagrov2002} introduced a general method to show the undecidability of various properties, including the FMP, first-order definability, decidability, tabularity, and the coincidence with a fixed tabular logic. Kracht and Wolter~\cite{Kracht.Wolter1999} proved independently that decidability, the FMP, and tabularity are undecidable in $\NExt\mathsf{K}$ via the Thomason-Simulation \cite{Thomason1974a,Thomason1975b}. On the positive side, it is well-known that consistency is decidable for normal unimodal logics as the lattice $\NExt\K$ has only two coatoms \cite{Makinson1971} (see also \cite[Section 17.2]{Chagrov.Zakharyaschev1997}). Recently, Takahashi~\cite{Takahashi2026} proved that the property of being a union-splitting of $\NExt\mathsf{K}$ and strict Kripke completeness are decidable in $\NExt\K$ (see also \cite{TakahashiThesis}).

If $P$ is decidable in $\NExt L$, then it is also decidable in $\NExt L'$ for any extension $L'$ of $L$ with finitely many axioms (Proposition~\ref{prop:finite-extension}). On the other hand, a property that is undecidable in a given lattice $\mathcal{C}$ of logics may become decidable in a sublattice of $\mathcal{C}$. For example, in the lattice $\NExt\mathsf{K4}$ of transitive modal logics (recall that $\mathsf{K4} = \mathsf{K} \oplus \Box p \to \Box \Box p$), a sublattice of $\NExt\mathsf{K}$, every tabular logic has only finitely many immediate predecessors and all of them are tabular \cite{Blok1980}, so the coincidence with a fixed tabular logic is decidable in this setting \cite{Jankov1968b,Rautenberg1979a} (see also \cite[Theorem 17.3]{Chagrov.Zakharyaschev1997}). Moreover, as far as we know, the decidability of tabularity in $\NExt\ML{K4}$ is still open \cite{rautenbergWillemBlokModal2006}, whereas it is undecidable in $\NExt\mathsf{K}$ as mentioned above.

In this paper, we study the decision problem for properties for \emph{tense logics}. Tense logics are normal bimodal logics extending the logic $\TL{K} = \K_2 \oplus \set{p \to \B\bd p, p \to \bb\D p}$, where, following convention, we denote the two modalities $\B$ and $\bb$ with their duals $\D$ and $\bd$. The intended meaning of $\B$ and $\bb$ are ``always true in the future'' and ``always true in the past'', respectively. The least tense logic $\TL{K}$ was introduced in the 1960s by Prior \cite{Prior1967,Prior1968}. Philosophically, tense logics can be viewed as logics of time; for example, $\TL{K4} = \TL{K} \oplus \D\D p \to \D p$ is the logic of transitive time flows, and $\TL{Lin} = \TL{K4} \oplus (\D\bd p \vee \bd\D p \to p \vee \D p \vee \bd p)$ is the logic of linear time flows (see also \cite{Wolter.Zakharyaschev2007}). Algebraically, the modalities $\B$ and $\bd$ are adjoint, in the sense that $\bd\phi \to \psi \in L$ if and only if $\phi \to \B\psi \in L$ for any tense logic $L$. Thus, tense logics may also be viewed as ``logics of adjointness,'' and they serve as examples of bimodal logics in which two modalities interact in a natural way.

The bimodal logics $\ML{K}_2 \sub \TL{K} \sub \TL{K4} \sub \TL{Lin}$ form a chain in $\NExt\K_2$. Many properties, including Kripke completeness, the FMP, decidability, and even consistency, are undecidable in $\NExt\K_n$ for $n \geq 2$ \cite{Thomason1982}. On the other hand, the aforementioned properties are all decidable in $\NExt\TL{Lin}$ \cite{Wolter1996a,Wolter1997}. These results indicate that the interactions of modalities significantly affect the decidability of properties, which naturally raises the question of whether these properties are decidable in $\NExt\TL{K}$ or $\NExt\TL{K4}$. Chagrov and Shehtman~\cite{Chagrov.Shehtman1995} proved that tabularity, the coincidence with a fixed tabular tense logic, and consistency are undecidable in $\NExt\TL{K4}$ (and thus also undecidable in $\NExt\TL{K}$), while the decidability of other properties, such as Kripke completeness, the FMP, and decidability, remained open.

Although $\TL{K}$ and $\TL{K4}$ are the minimal tense extensions of $\ML{K}$ and $\ML{K4}$ respectively, the undecidability of properties in $\NExt\TL{K}$ and $\NExt\TL{K4}$ does not follow directly from the undecidability results for $\NExt\mathsf{K4}$ and $\NExt\mathsf{K}$. The minimal tense extension map $(\cdot)_t: \NExt\mathsf{K} \to \NExt\mathsf{K}_t$ is not injective~\cite{Wolter1993}, while it remains unknown whether $(\cdot)_t{\rsto}\NExt\mathsf{K4}$ is injective \cite[p. 158]{Wolter1997a}. Moreover, Wolter~\cite{Wolter1996} presented a modal logic $L \in \NExt\mathsf{K4}$ having the FMP whose minimal tense extension $L_t \in \NExt\TL{K4}$ is Kripke incomplete. It follows that the map $(\cdot)_t$ preserves neither Kripke completeness nor the FMP, while it remains open whether decidability is preserved (see \cite[p. 132]{Wolter1997a}). The interactions between tense modalities make lattices of tense logics completely different from those of unimodal logics (see \cite{Kracht1992,Ma.Chen2021,Ma.Chen2023,Chen.Ma2024}).

In this paper, we provide a general criterion (Theorem~\ref{thm:reduction-K4t-P}) for a property to be undecidable in $\NExt\TL{K4}$. Most properties studied for logics fall into this criterion, including Kripke completeness, the FMP, and decidability; see Corollary~\ref{Thm:undec-properties} for a more comprehensive list of undecidable properties in $\NExt\TL{K4}$ following from the theorem. It follows that these properties are also undecidable in $\NExt\TL{K}$.

Our proof adapts the method of \cite{Chagrov.Shehtman1995}, reducing an undecidable problem regarding Minsky machines to the decision problem for a property. \emph{Minsky machines}, also called counter machines or register machines, are a type of mathematical model of computation as strong as Turing machines \cite{minskyComputationFiniteInfinite1967}. There exist a Minsky machine $\M$ and a configuration $c_0$ of $\M$ such that it is undecidable whether a given configuration of $\M$ is reachable from $c_0$ by computation of $\M$ (see, e.g., \cite[Theorem 16.3]{Chagrov.Zakharyaschev1997}). Let us call this undecidable problem $Q$. Given a property $P$, we will find a logic $L \in \NExt\TL{K4}$ that has $P$ and construct a computable reduction from $Q$ to the decision problem for $P$ as follows. For a configuration $c$, the reduction produces a logic $L(c) \in \NExt\TL{K4}$ such that: 
\begin{enumerate}
    \item if $c$ is reachable from $c_0$, then $L(c) = L$, which implies that $L(c)$ has $P$;
    \item if $c$ is not reachable from $c_0$, then $L(c)$ does not have $P$.
\end{enumerate}
Thus, if we could decide whether a logic in $\NExt\TL{K4}$ has $P$, we would be able to decide the problem $Q$: Given a configuration $c$, we compute the logic $L(c)$ and ask if it has the property $P$, the answer of which is also the answer to $Q$. Since $Q$ is undecidable, it follows that $P$ is undecidable.

Table~\ref{table:UndecidableProperties} summarizes the results on the decidability of some major properties discussed so far. The entries marked with \textsuperscript{\dag} are results established in this paper.

\begin{table}[htbp]
    \centering
    {\renewcommand{\arraystretch}{1.5}
    \begin{tabular}{c|c|c|c|c|c|c}
        & Cons. & Tab. & Fixed Tab. & KC & FMP & Dec. \\ \hline
        $\NExt\ML{K}$ 
        & $\checkmark$ & $\times$ & $\times$ & $\times$ & $\times$ & $\times$ \\ \hline

        $\NExt\ML{K4}$ 
        & $\checkmark$ & ? & $\checkmark$ & $\times$ & $\times$ & $\times$ \\ \hline

        $\NExt\ML{K_2}$ 
        & $\times$ & $\times$ & $\times$ & $\times$ & $\times$ & $\times$ \\
        \hline 

        $\NExt\TL{K}$ 
        & $\times$ & $\times$ & $\times$ 
        & $\times$\textsuperscript{\dag} 
        & $\times$\textsuperscript{\dag} 
        & $\times$\textsuperscript{\dag} \\ \hline

        $\NExt\TL{K4}$  
        & $\times$ & $\times$ & $\times$ 
        & $\times$\textsuperscript{\dag} 
        & $\times$\textsuperscript{\dag} 
        & $\times$\textsuperscript{\dag} \\ \hline

        $\NExt\TL{Lin}$ 
        & $\checkmark$ & $?$ & $\checkmark$ & $\checkmark$ & $\checkmark$ & $\checkmark$
    \end{tabular} 
    }

    \vspace{0.5em}
    {\footnotesize
    \textbf{Abbreviations.}
    Cons.: consistency;
    Tab.: tabularity;
    Fixed Tab.: coincidence with a fixed tabular logic;\\
    KC: Kripke completeness;
    FMP: finite model property;
    Dec.: Decidability.
    }
    \caption{(Un)decidability of logics' properties in lattices of unimodal and bimodal logics}
    \label{table:UndecidableProperties}
\end{table}

This paper is organized as follows. Section~\ref{Sec 2} introduces preliminaries on tense logics and Minsky machines. In section~\ref{Sec 3}, we prove the main theorem and apply it to show the undecidability of logics' properties. Finally, Section~\ref{Sec 4} concludes the paper with an overview of future work.

\section{Preliminaries} \label{Sec 2}

\subsection{Preliminaries of modal and tense logic}

Recall that for each $n \in \mathbb{Z}^+$, the \textit{$n$-modal language} $\L_n$ is obtained by adding the modalities $\B_0, \cdots, \B_{n-1}$ to the propositional language. A \textit{normal $n$-modal logic} is a set of $\L_n$-formulas which contains all classical tautologies, K-axioms $\B_i(p \to q) \to (\B_i p \to \B_i q) \in L$ for all $i < n$, and is closed under the rules Modus Ponens ($\phi,\phi \to \psi/\psi$), Substitution ($\phi(p_1,\cdots,p_n)/\phi(\psi_1,\cdots,\psi_n)$) and Necessitation ($\phi/\B_i\phi$). Let $\ML{K}_n$ denote the minimal normal $n$-modal logic. We refer to $2$-modal logics as \textit{bimodal logics}.

A \textit{tense logic} is a normal bimodal logic containing the axioms $p \to \B_0\D_1 p$ and $p \to \B_1\D_0 p$. For tense logics, we write $\Lt$ for the formal language, $\B$ for $\B_0$ and $\bb$ for $\B_1$. Let $\TL{K}$ denote the \textit{minimal tense logic}. For each tense logic $L$, let $\NExt L$ denote the lattice of all normal extensions of $L$. For every tense logic $L$ and set of formulas $\Sigma$, let $L\oplus\Sigma$ denote the smallest tense logic containing $L\cup\Sigma$. We write $L \oplus \phi$ for $L \oplus \set{\phi}$. A tense logic $L$ is \emph{finitely axiomatizable} if $L = \TL{K} \oplus \phi$ for some $\phi \in \Lt$.
A tense logic $L$ is \emph{consistent} if $\bot\not\in L$, and so the only inconsistent tense logic is $\Lt$.

A \emph{general frame} is a triple $\gf=(X,R,A)$ where $X$ is a non-empty set, $R$ a binary relation on $X$ and $A$ a subset of $\powerset{X}$ such that (i) $\ve \in A$, and (ii) $A$ is closed under Boolean operations on $\powerset{X}$, $R[\cdot]$ and $\inver{R}[\cdot]$, where $R[Y] \coloneqq \set{x \in X: \exists{y\in Y}(Ryx)}$ and $\inver{R}[Y] \coloneqq \set{x \in X: \exists{y\in Y}(Rxy)}$ for all $Y \sub X$. A \textit{Kripke frame} $\ff$ is a general frame of the form $(X,R,\powerset{X})$ and we simply write $(X,R)$. Let $\GFr$, $\Fr$, and $\Fin$ denote the classes of all general frames, Kripke frames, and finite Kripke frames, respectively.

A \emph{model} is a pair $\mm=(\gf,V)$ where $\gf\in\GFr$ and $V: \Prop \to A$ a valuation in $\gf$. $V$ is extended to $V:\Lt\to A$ as usual: $V(\bd\phi) = R[V(\phi)]$ and $V(\B\phi) \coloneqq X \setminus \inver{R}[X\setminus V(\phi)]$. The expressions $\mm,x \md \phi$, $\gf,x \md \phi$, $\gf \md \phi$ and $\gf \md \Sigma$ are defined as usual. Note that $\mm,x \md \bd \phi$ if and only if $\mm,y \md \phi$ for some $y \in \inver{R}[x]$. For all sets $\Sigma\sub\Lt$ of formulas and classes $\mathcal{K}\sub\GFr$ of general frames, let 
    \begin{center}
        $\mathcal{K}(\Sigma) \coloneqq \set{\gf\in\mathcal{K}:\gf\vDash\Sigma}$ and $\mathsf{Log}(\mathcal{K}) \coloneqq \set{\phi:\mathcal{K}\vDash\phi}$.
    \end{center}
For example, given a tense logic $L$, we write $\Fin(L)$ for the class of all finite frames validating $L$. We call $\mathsf{Log}(\mathcal{K})$ the \emph{tense logic of $\mathcal{K}$}.
Recall that a Kripke frame $(X,R)$ is called \textit{transitive} if $R$ is transitive. 
Then $\TL{K4} \coloneqq \TL{K} \oplus \D\D p \to \D p$ is the tense logic of transitive frames. In other words, we have $\TL{K4} = \mathsf{Log}(\set{\ff \in \Fr: \ff \text{ is transitive}})$. A tense logic $L$ is transitive if $L$ extends $\TL{K4}$, i.e., $L \supseteq \TL{K4}$.

Let us recall some properties of tense logics. Let $L$ be a tense logic. Then (i) $L$ is \textit{Kripke complete}, if $L = \Log(\Fr(L))$; (ii) $L$ has the \textit{finite model property} (\textit{FMP}), if $L = \Log(\Fin(L))$; (iii) $L$ is \textit{tabular}, if $L = \Log(\ff)$ for some finite frame $\ff$; (iv) $L$ is \emph{canonical}, if $L=\mathsf{Log}(\ff^L)$, where $\ff^L$ is the \textit{canonical frame} for $L$; (v) $L$ is first-order definable, if $L = \Log(\CK)$, where $\CK$ is a class of Kripke frames defined by a set of first-order sentences. Moreover, we say that $L$ is \emph{locally tabular} if for each $n \in \omega$, $L$ contains only finitely many non-$L$-equivalent formulas built up from the propositional variables $p_{0}, \cdots, p_{n-1}$. Finally, we say that $L$ is \emph{decidable} if there is an algorithm that, given a formula $\phi$, decides whether $\phi \in L$. This \emph{membership decidability} is a property for logics, and should not be confused with decidability of logics' properties; in particular, it is legitimate to say decidability, as a property, is decidable or not.

\subsection{Decision problem for logics' properties} \label{subsec2.2}

In this section, instead of the decidability of logics, we focus on the decidability of logics' properties.
We refer to \cite[Chapter 17]{Chagrov.Zakharyaschev1997} and \cite{Wolter.Zakharyaschev2007} for a detailed introduction and survey of the decision problem for properties of modal logics. We will work in the tense (or more generally, bimodal) setting. We identify a property $P$ in the lattice $\NExt(L_0)$ with the set of logics in $\NExt(L_0)$ that satisfy $P$, that is, $P = \{L \in \NExt(L_0): L \text{ satisfies } P\}$. 

\begin{definition} \label{def:decidable}
    Let $L_0$ be a tense logic. A property $P$ is \emph{decidable} in $\NExt(L_0)$ iff the set $\{\phi: L_0 \oplus \phi \in P\}$ is decidable.
\end{definition}

Following convention, we restrict ourselves to finitely axiomatizable logics because an input for an algorithm must be a finite object. We do not consider all recursively axiomatizable logics, as Kuznetsov showed that otherwise the only decidable properties would be the trivial ones (see \cite[Section 17.1]{Chagrov.Zakharyaschev1997}). Since most logics we encounter in practice are finitely axiomatizable, this is not a serious drawback. A finitely axiomatizable logic will be encoded by a finite set of formulas axiomatizing the logic, or equivalently, a single formula axiomatizing the logic.

Note that determining a property in a larger lattice of logics is at least as hard as in a smaller one, in the following sense.

\begin{proposition} \label{prop:finite-extension}
    Let $L \in \NExt\K_n$ and $L'$ be an extension of $L$ with finitely many axioms. If a property $P$ is undecidable in $\NExt L'$, then it is undecidable in $\NExt L$.
\end{proposition}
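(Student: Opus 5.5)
We argue by contraposition: assuming $P$ is decidable in $\NExt L$, we show it is decidable in $\NExt L'$. Since $L'$ is an extension of $L$ by finitely many axioms, we fix once and for all a single formula $\psi$ (the conjunction of those finitely many axioms) with $L' = L \oplus \psi$. Here $\oplus$ refers to the smallest normal $n$-modal logic containing the given set; the same convention is used in Definition~\ref{def:decidable}, read for $\NExt\K_n$.

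The key identity is that, for every formula $\phi$,
\[
L' \oplus \phi \;=\; (L \oplus \psi) \oplus \phi \;=\; L \oplus (\psi \wedge \phi).
\]
We check both inclusions from the definition of $\oplus$ as the least normal logic containing a set. Any normal logic containing $L \cup \{\psi,\phi\}$ contains $\psi \wedge \phi$ by propositional reasoning. Conversely, any normal logic containing $L$ and $\psi\wedge\phi$ contains both conjuncts by Modus Ponens with tautologies. Hence the two sides are the least normal logic over the same deductive closure and coincide. Note that the normality of the extension (closure under Substitution and Necessitation) causes no difficulty, since conjunction elimination and introduction are purely propositional. One subtle point is to use distinct variables harmlessly: $\psi$ and $\phi$ may share propositional letters. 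This is fine, because $L \oplus \{\psi,\phi\}$ closes each axiom under substitution separately. Likewise $L \oplus (\psi\wedge\phi)$ derives $\psi$ and $\phi$ and then substitutes into each freely, so no renaming is needed.

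With this identity, $\{\phi : L' \oplus \phi \in P\} = \{\phi : L \oplus (\psi\wedge\phi) \in P\}$. The map $\phi \mapsto \psi \wedge \phi$ is computable, because $\psi$ is a fixed finite object hard-coded into the algorithm. Therefore a decision procedure for $\{\chi : L \oplus \chi \in P\}$ composed with this map decides $\{\phi : L'\oplus\phi\in P\}$. So decidability of $P$ in $\NExt L$ yields decidability in $\NExt L'$, which is the contrapositive of the claim.

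The only step needing any care is the identity $L'\oplus\phi = L\oplus(\psi\wedge\phi)$. Even that is routine, so I expect no real obstacle. The essential use of finite axiomatizability of $L'$ over $L$ is that the reduction needs a single fixed formula $\psi$. With infinitely many extra axioms, the map $\phi\mapsto\psi\wedge\phi$ would not be available.
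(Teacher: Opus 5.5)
Your proof is correct and is the paper's argument: contraposition, a single formula $\psi$ with $L' = L \oplus \psi$, and the identity $L' \oplus \phi = L \oplus (\psi \wedge \phi)$, which makes $\phi \mapsto \psi \wedge \phi$ a computable reduction. Your explicit check of that identity spells out what the paper leaves as ``the two logics are the same''; the remark about shared variables is harmless but not needed.
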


\begin{proof}
    We may assume $L' = L \oplus \phi$ for a formula $\phi$. We prove the contrapositive. Let $P$ be a decidable property in $\NExt L$. Then, given a formula $\psi$, we can determine whether $L' \oplus \psi$ has $P$ by asking whether $L \oplus \phi \land \psi$ has $P$ since the two logics are the same.
\end{proof}

\subsubsection*{Minsky machines}

The most commonly used method for proving the undecidability of a decision problem is to construct a computable reduction from another problem that is already known to be undecidable to the problem. In this paper, we will use an undecidable problem about Minsky machines. We recall the basics of Minsky machines in the rest of this section and refer to \cite[Section 16.1]{Chagrov.Zakharyaschev1997} and \cite{minskyComputationFiniteInfinite1967} for more details; see also \cite[Sections 16 and 17]{Chagrov.Zakharyaschev1997} for various applications of Minsky machines to obtain undecidability results.

A \emph{Minsky machine} with two registers (also called a counter/register machine with two counters/registers) is a finite set of instructions acting on two registers. We will only use Minsky machines with two registers, so we simply call them Minsky machines. A Minsky machine has finitely many \emph{states}. A \emph{register} can store a natural number and is assumed to be unbounded. So, a situation of a Minsky machine is represented by a tuple $\tup{s, n, m}$, called a \emph{configuration}, where $s$ is the current state and $n$ and $m$ are the natural numbers on each register. An instruction operates on the state and one of the two registers: it increments the number in the register, or tests if the number in the register is zero and decrements it if not. More specifically, an instruction $I$ has one of the following four forms: 

\begin{itemize}
    \item $I = t \to \tup{t', 1, 0}$ means that $I$ turns the state $t$ into $t'$ and increment the first register,
    \item $I = t \to \tup{t', 0, 1}$ means that $I$ turns the state $t$ into $t'$ and increment the second register,
    \item $I = t \to \tup{t', -1, 0} (\tup{t'', 0, 0})$ means that $I$ turns the state $t$ into $t'$ and decrements the first register if the number in the first register is non-zero, and turns the state $t$ into $t''$ otherwise,
    \item $I = t \to \tup{t',0, -1} (\tup{t'', 0, 0})$ means that $I$ turns the state $t$ into $t'$ and decrements the second register if the number in the second register is non-zero, and turns the state $t$ into $t''$ otherwise.
\end{itemize}

For example, applying the instruction $I = s \to \tup{s', -1, 0} (\tup{s'', 0, 0})$ to the configuration $\tup{s, n, m}$, we obtain the configuration $\tup{s', n-1, m}$ if $n \geq 1$ and the configuration $\tup{s'', n, m}$ if $n = 0$. 

In this paper, Minsky machines are assumed to be \emph{deterministic}, that is, for each state $t$ there is at most one instruction that acts on the state $t$. For a Minsky machine $\M$, we write $\M: \tup{s,n,m} \rightsquigarrow \tup{t,k,l}$ if, starting from the configuration $\tup{s,n,m}$, by applying the instructions in $\M$, we can reach the configuration $\tup{t,k,l}$ in finitely many (possibly 0) steps. We drop $\M$ if it is clear from the context. 

\begin{example}
    Let $\M = \set{s \to \tup{s, 1, 0}}$. Then, for any $n, m \in \omega$, 
    \[\set{\tup{t,k,l}: \tup{s,n,m} \rightsquigarrow \tup{t,k,l}} = \set{\tup{s, n+i, m}: i \in \omega}.\]
\end{example}

We will use the following undecidable problem, which is called the \emph{second configuration problem} in \cite[Theorem 16.3]{Chagrov.Zakharyaschev1997}. 

\begin{theorem} \label{Thm:undec-minsky}
    There exist a Minsky machine $\M$ and a configuration $\tup{s,n,m}$ such that the reachability from $\tup{s,n,m}$ in $\M$ is undecidable, that is, the set $\set{\tup{t,k,l}: \tup{s,n,m} \rightsquigarrow \tup{t,k,l}}$ is undecidable.
\end{theorem}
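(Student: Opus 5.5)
The plan is to reduce the halting problem for Turing machines (or, more conveniently, for two-register Minsky machines) to the second configuration problem. First I invoke Minsky's classical result that every Turing machine, and hence every partial computable function, can be simulated by a deterministic two-register Minsky machine. Concretely, I fix a computably enumerable but undecidable set $W \sub \omega$, for instance the halting set. I then fix a deterministic Minsky machine $\M_0$ that, started in the configuration $\tup{s_0, 2^x, 0}$, halts in a designated state $f$ if and only if $x \in W$. Such a machine exists by the standard encoding of a multi-register machine into two registers via prime-power coding $2^{a}3^{b}5^{c}\cdots$. Halting here means reaching a state on which no instruction acts.

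Next I build a single machine $\M$ and a single start configuration that enumerate $W$. The idea is to dovetail: $\M$ starts at $\tup{s,0,0}$ and cycles through stages $k = 0,1,2,\dots$. At stage $k$, for each $x \leq k$, it runs $\M_0$ on input $x$ for $k$ steps, keeping the bookkeeping (current $k$, current $x$, a step counter, and the simulated registers) prime-power coded in the first register, with the second register used as scratch. Whenever the simulation of $\M_0$ on $x$ enters $f$ within the bound, $\M$ clears its second register, arranges that the first register holds exactly $x$ (saving its real bookkeeping is impossible then, so instead I make this a transient visit), enters a special state $r$, and immediately resumes.

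The difficulty is that the first register must hold exactly $x$ at the moment of entering $r$, while the dovetailing information must survive. My fix is to let $\M$ nondeterministically "guess" by sequencing instead. Since $\M$ is deterministic, I adopt a different design. For each $x$ I run a separate enumeration in which the configuration $\tup{r, x, y}$ is reached with the second register $y$ encoding the bookkeeping, and the encoding is chosen so that $y$ can be arbitrary. Then I ask about $\tup{r,x,\cdot}$. To get a single fixed target instead, I reduce as follows. Given $x$, consider the configuration $\tup{r_x,\dots}$ reached by running $\M_0$ from $\tup{s_0,2^x,0}$. I then use the cleaner standard route: modify $\M_0$ so that, on halting, it zeroes both registers and enters $f$. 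Then $x\in W$ iff $\tup{s_0,2^x,0}\rightsquigarrow\tup{f,0,0}$. This is the \emph{first} configuration problem, with a varying start and a fixed target.

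To pass to the second problem (fixed start, varying target), I reverse the direction using a generator: $\M$ starts at $\tup{s,0,0}$, and for a fixed computable bijection it can produce any configuration. I expect this to be the main obstacle. The honest solution is the dovetailing machine whose visits to $r$ output $\tup{r,x,0}$ exactly for $x\in W$, with all bookkeeping encoded so that it is restorable afterwards. My candidate encoding stores the bookkeeping in the state sequence plus the second register, which must be temporarily zero. To achieve this, before entering $r$ the machine moves its bookkeeping $B$ into the first register as $2^{x}3^{B}$; I then relax the target to $\tup{r, 2^x3^B,0}$, which is not fixed. So I would instead cite \cite[Theorem 16.3]{Chagrov.Zakharyaschev1997} for this bookkeeping step, since the statement is quoted from there.
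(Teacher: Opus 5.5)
\textbf{Verdict: the proposal has a genuine gap.} The paper gives no proof of this statement; it quotes it from \cite[Theorem 16.3]{Chagrov.Zakharyaschev1997}. Your final fallback to that citation is therefore in line with the paper. As an argument, though, your proposal never completes the one step that matters. You need a deterministic run from a fixed configuration, together with a map $x \mapsto c_x$ into single configurations, computable in $x$, such that $x \in W$ iff $c_x$ is reachable. You abandon the construction exactly there, and none of your detours close the gap:
nondeterministic guessing is not available, since $\M$ must be deterministic;
the first configuration problem has a fixed target, which is the wrong shape;
neither the family $\tup{r,x,\cdot}$ nor the target $\tup{r,2^x3^B,0}$ gives a single configuration computable from $x$, since $B$ depends on the stage at which $x$ is enumerated.

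The missing idea is that the bookkeeping need not survive an output: it can be recomputed from the output itself. Determinism in fact forces this. Once the run passes through $\tup{r,2^x,0}$, its future depends on $x$ alone, so the next output must be a partial computable function of the current one.

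Take an infinite computably enumerable, non-computable $W$ with a total computable injective enumeration $g$. Let $h(x) = g(i+1)$ for the unique $i$ with $g(i) = x$. Then $h$ is partial computable (search $g(0), g(1), \dots$ for $x$) and is defined exactly on $W$.

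By the two-register simulation you already invoke, there is a deterministic machine that, from $\tup{r_0,2^x,0}$ with $x \in W$, reaches the halting configuration $\tup{f,2^{h(x)},0}$. Redirect the instructions leading into $f$ so that they lead instead to a fresh state $r$. Give $r$ a single dummy step into $r_0$, e.g.\ $r \to \tup{r',1,0}$ and $r' \to \tup{r_0,-1,0}(\tup{r_0,0,0})$. This guarantees that $r$ is never entered mid-computation.

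Start from $\tup{r,2^{g(0)},0}$. The configurations in state $r$ visited by the run are then exactly $\tup{r,2^{g(i)},0}$ for $i \in \omega$. Hence $x \in W$ iff $\tup{r,2^{g(0)},0} \rightsquigarrow \tup{r,2^x,0}$, and $x \mapsto \tup{r,2^x,0}$ is computable, so reachability is undecidable.

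Injectivity of $g$ is what prevents the run from cycling. Your worry that the first register must hold exactly $x$ is also unfounded: any target computable from $x$ will do.
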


\section{Undecidable properties in $\NExt\TL{K4}$} \label{Sec 3}

The aim of this section is to prove the general undecidability result Theorem~\ref{thm:reduction-K4t-P}, which implies the undecidability of various properties summarized in Corollary~\ref{Thm:undec-properties}. The proof idea follows Chagrov's method of using Minsky machines \cite{chagrov1990I, chagrov1990II,Chagrov.Shehtman1995} (see also \cite{Wolter.Zakharyaschev2007}). Let $\M$ be a Minsky machine and $\tup{s,n,m}$ be a configuration of $\M$ such that the set $\set{\tup{t,k,l}: \M: \tup{s,n,m}\rightsquigarrow\tup{t,k,l}}$ is undecidable, given by Theorem~\ref{Thm:undec-minsky}. Since the Minsky machine $\M$ is finite, we may assume that $\M$ contains $t_0$ many states, labeled as $0, \dots, t_0-1$. To state our main theorem, we introduce the following general frame that encodes the problem $\set{\tup{t,k,l}:\tup{s,n,m}\rightsquigarrow\tup{t,k,l}}$.

\begin{definition}
    Let $\gf=(W,R,A)$ be the general frame defined as follows:
\begin{itemize}
    \item $W=\set{\tup{t,k,l}:\tup{s,n,m}\rightsquigarrow\tup{t,k,l}}\cup\set{a_n, b_n, c_n :n<\omega}\cup\set{a', b', b''}$;
    \item $R$ is the transitive closure of the union of the following binary relations:
    \begin{itemize}
        \item $\set{\pair{c_i,c_j}:j<i<\omega}$;
        \item $\set{\pair{a_i,a_j}:j<i<\omega}\cup\set{\pair{a',a_0}}$;
        \item $\set{\pair{b_i,b_j}:j<i<\omega}\cup\set{\pair{b',b_0},\pair{b',b''}}$;
        \item $\set{\pair{\tup{t,k,l},c_t},\pair{\tup{t,k,l},a_k},\pair{\tup{t,k,l},b_l} ,\pair{\tup{t,k,l},\tup{t,k,l}}:\tup{s,n,m}\rightsquigarrow\tup{t,k,l}}$.
    \end{itemize}
    \item $A = \set{U \sub W : \text{ $U$ is finite or cofinite on $\set{c_i : i \in \omega}$}}$.
\end{itemize}
\end{definition}

It is clear that $A$ is closed under $\cap$, $W\setminus(\cdot)$, $R[\cdot]$ and $\inver{R}[\cdot]$, so $\gf$ is well-defined as a general frame. The underlying Kripke frame $(W,R)$ of $\gf$ is depicted in Figure~\ref{fig:fi-K4t}. 
\begin{figure}[htbp]
    \small
\centering
\begin{tikzpicture}[scale=0.7]

\def\ptRad{.2pt}
\node (a2) at (0,6)[label=left:$c_0$]{$\bullet$};
\node (a3) at (0,5)[label=left:$c_1$]{$\bullet$};
\node (a4) at (0,4)[label=left:$c_2$]{$\bullet$};   
\node (ai0) at (0,2.5)[label=left:$c_{t-1}$]{$\bullet$};
\node (ai1) at (0,1.5)[label=left:$c_{t}$]{$\bullet$};
\node (ai2) at (0,0.5)[label=left:$c_{t+1}$]{$\bullet$};

\node (b2) at (2.50,6)[label=left:$a_0$]{$\bullet$};
\node (b3) at (2.50,5)[label=left:$a_1$]{$\bullet$};
\node (b4) at (2.50,4)[label=left:$a_2$]{$\bullet$};
\node (bi0) at (2.50,2.5)[label=left:$a_{k-1}$]{$\bullet$};
\node (bi1) at (2.50,1.5)[label=left:$a_{k}$]{$\bullet$};
\node (bi2) at (2.50,0.5)[label=left:$a_{k+1}$]{$\bullet$};

\node (c2) at (5.0,6)[label=left:$b_0$]{$\bullet$};
\node (c3) at (5.0,5)[label=left:$b_1$]{$\bullet$};
\node (c4) at (5.0,4)[label=left:$b_2$]{$\bullet$};
\node (ci0) at (5.0,2.5)[label=left:$b_{l-1}$]{$\bullet$};
\node (ci1) at (5.0,1.5)[label=left:$b_{l}$]{$\bullet$};
\node (ci2) at (5.0,0.5)[label=left:$b_{l+1}$]{$\bullet$};

\node (vd1) at (5.0,3.4){$\vdots$};
\node (vd2) at (0,3.4){$\vdots$};
\node (vd3) at (2.50,3.4){$\vdots$};

\node (vd4) at (0,0){$\vdots$};
\node (vd5) at (5.0,0){$\vdots$};
\node (vd6) at (2.50,0){$\vdots$};

\node (z2) at (5.0+2,6)[label=below:$b''$]{$\bullet$};
\node (z1) at (5.0+1,5)[label=below:$b'$]{$\bullet$};
\node (y1) at (2.50+1,5)[label=below:$a'$]{$\bullet$};

\node (s) at (2.0,-1){$\bullet$};
\draw[loop below, ->] (s) to (s);
\node (s) at (2.0,-1)[label=left:$\cdots$]{};
\node (s) at (2.0,-1)[label=right:$\tup{t,k,l}$]{};

\draw [->,shorten <>=\ptRad] (a3) -- (a2);
\draw [->,shorten <>=\ptRad] (a4) -- (a3);
\draw [->,shorten <>=\ptRad] (ai1) -- (ai0);
\draw [->,shorten <>=\ptRad] (ai2) -- (ai1);

\draw [->,shorten <>=\ptRad] (b3) -- (b2);
\draw [->,shorten <>=\ptRad] (b4) -- (b3);
\draw [->,shorten <>=\ptRad] (bi1) -- (bi0);
\draw [->,shorten <>=\ptRad] (bi2) -- (bi1);

\draw [->,shorten <>=\ptRad] (c3) -- (c2);
\draw [->,shorten <>=\ptRad] (c4) -- (c3);
\draw [->,shorten <>=\ptRad] (ci1) -- (ci0);
\draw [->,shorten <>=\ptRad] (ci2) -- (ci1);

\draw [->,shorten <>=\ptRad] (z1) -- (z2);
\draw [->,shorten <>=\ptRad] (z1) -- (c2);
\draw [->,shorten <>=\ptRad] (y1) -- (b2);

\draw [->,shorten <>=\ptRad] (s) -- (ai1);
\draw [->,shorten <>=\ptRad] (s) -- (bi1);
\draw [->,shorten <>=\ptRad] (s) -- (ci1);
\end{tikzpicture}

\caption{The underlying frame of $\gf$}
\label{fig:fi-K4t}
\end{figure}

Before studying properties of the general frame $\gf$, let us introduce a set of new modal operators $\is{n}$ and their duals $\all{n}$, which will play an important role in our proofs.

\begin{definition}
    For each $n\in\omega$ and $\phi,\psi\in\Lt$, we define the formula $\is{n}\phi$ by:
\begin{center}
    $\is{0}\phi=\phi$ and $\is{k+1}\phi=\is{k}\phi\vee\D\is{k}\phi\vee\bd\is{k}\phi$.
\end{center}
As usual, we define the dual operator $\all{n}$ of $\is{n}$ by $\all{n}\phi\coloneqq\neg\is{n}\neg\phi$. 
\end{definition}
For each Kripke frame $\ff = (X,R)$, let $\R$ be the binary relation $(= \cup R \cup \inver{R})$ on $X$, that is,
\begin{center}
    $\R = \set{(x,y) \in X \times X : x=y \text{ or } Rxy \text{ or } Ryx}$.
\end{center}
It follows that for all $x,y \in X$, if $y \in \R^{n}[x]$, then there exists a sequence $C = \tup{x_i : i \leq k}$ with $k < n$ such that (i) $x = x_0$, (ii) $y = x_k$, and (iii) $x_{i+1} \in \R[x_{i}]$ for all $i < k$. We call $C$ a \emph{$k$-path} from $x$ to $y$.

\begin{proposition}
    Let $\mm=(X,R,V)$ be a model, $x\in X$ and $\phi\in\Lt$. Then for all $k\in\omega$,
    \begin{center}
        $\mm,x\md\is{k}\phi$ if and only if $\mm,y\md\phi$ for some $y\in\R^k[x]$.
    \end{center}
\end{proposition}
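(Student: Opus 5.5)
We argue by induction on $k$, using only the definition of $\is{k}$ and the Kripke semantics of $\D$ and $\bd$. Here $\R^k$ is the $k$-fold composition of $\R$, with $\R^0$ the identity relation, so $\R^0[x]=\set{x}$.

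For the base case $k=0$, $\is{0}\phi=\phi$ and $\R^0[x]=\set{x}$, so the claim is immediate.

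For the inductive step, assume the claim for $k$. By definition, $\mm,x\md\is{k+1}\phi$ iff $\mm,x\md\is{k}\phi$, or $\mm,x\md\D\is{k}\phi$, or $\mm,x\md\bd\is{k}\phi$. Since $\D$ looks at $R$-successors and $\bd$ at $R$-predecessors (as noted after the definition of models), the disjunction holds iff there is some $z$ with $z=x$ or $Rxz$ or $Rzx$, that is $z\in\R[x]$, such that $\mm,z\md\is{k}\phi$. By the induction hypothesis applied at $z$, this holds iff there are $z\in\R[x]$ and $y\in\R^k[z]$ with $\mm,y\md\phi$. That is exactly $y\in\R^{k+1}[x]$, since $\R^{k+1}=\R\circ\R^k$ (first take an $\R$-step from $x$, then $k$ more).

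There is no real obstacle. The only point to watch is the indexing convention. Because $\R$ is reflexive, $\R^{k}[x]\sub\R^{k+1}[x]$, so $\R^k[x]$ is also the set of points reachable by at most $k$ steps. This makes the statement robust whether one reads it as ``exactly $k$'' or ``at most $k$'' steps, matching the remark on $k$-paths preceding the proposition. One checks the composition order $\R\circ\R^k$ versus $\R^k\circ\R$ is harmless, since powers of a single relation commute.
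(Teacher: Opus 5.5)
Your proof is correct and is the same induction on $k$ that the paper uses; the paper only says ``by induction on $k$.'' You supply the details the paper omits: the base case $\R^0[x]=\set{x}$, the step via $\R^{k+1}=\R\circ\R^k$, and an induction hypothesis that holds for all points, so it can be applied at the intermediate point $z$.
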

\begin{proof}
    By induction on $k$.
\end{proof}

\begin{lemma}\label{lem:FMsnm-basic} 
    $\gf\md\is{6}p\to\is{5}p$.
\end{lemma}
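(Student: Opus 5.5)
The plan is to reduce the lemma to a purely graph-theoretic fact about the underlying Kripke frame $(W,R)$: every point of $W$ is joined to every other point by an $\R$-path of length at most $5$. Since $\R$ contains the identity, $\R^k[x]\sub\R^{k+1}[x]$ for all $k$, so this gives $\R^6[x]\sub\R^5[x]$ (indeed $\R^5[x]=W$) for every $x\in W$. Then, by the preceding proposition characterising $\is{k}$ via $\R^k$, for any valuation $V$ in $\gf$ (which is in particular a valuation on $(W,R)$) and any $x$ with $x\md\is{6}p$, some $y\in\R^6[x]\sub\R^5[x]$ satisfies $p$, so $x\md\is{5}p$. The admissible sets $A$ play no role here.

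To prove the diameter bound, I will first compute the $R$-successors explicitly from the transitive closure. Each chain $c_i$, $a_i$, $b_i$ is transitive with bottom element $c_0$, $a_0$, $b_0$ respectively, and each $c_i$, $a_i$, $b_i$ is $\R$-adjacent to the bottom of its chain. Next, $a'$ has exactly one successor, $a_0$. The point $b'$ has exactly the two successors $b_0$ and $b''$. A configuration $x=\tup{t,k,l}$ sees $c_j$ for $j\le t$, $a_j$ for $j\le k$, and $b_j$ for $j\le l$, and in particular sees $c_0$, $a_0$, $b_0$.

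I will use the initial configuration $x_0=\tup{s,n,m}\in W$ as a hub; it is in $W$ because it is reachable from itself in $0$ steps. Its distances to the other points are as follows.
\begin{itemize}
\item Every configuration is at distance $\le 2$ from $x_0$, via $a_0$.
\item Every $c_i$, $a_i$, $b_i$ is at distance $\le 2$, via $c_0$, $a_0$, $b_0$ respectively.
\item The point $a'$ is at distance $2$, via $x_0 R a_0$ and $a' R a_0$.
\item The point $b'$ is at distance $2$, via $b_0$.
\item The point $b''$ is at distance $3$, via $x_0\,R\,b_0$, then $b'\,R\,b_0$, then $b'\,R\,b''$.
\end{itemize}
Hence for $y,z\in W$ the path through $x_0$ has length $d(y,x_0)+d(x_0,z)\le 5$ unless both distances equal $3$. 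That happens only when $y=z=b''$, which is trivial.

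The only real obstacle is making sure the successor sets are computed correctly from the transitive closure. In particular, $a'$ and $b'$ reach only the bottoms $a_0,b_0$ (and $b''$), not whole chains, so the pair $(b'',a')$ genuinely needs length $5$. The bound is therefore tight and must be checked with care rather than estimated generously. Once the hub distances are verified, the rest follows by the triangle inequality as above.
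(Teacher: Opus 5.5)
Your proposal is correct and takes the same route as the paper, whose proof only asserts that any two points of $W$ are joined by an $(R\cup R^{-1})$-chain of length at most $5$. Your hub computation through $\tup{s,n,m}$ correctly verifies that claim: $b''$ is the only point at distance $3$ from the hub and all others are within distance $2$. Your tightness remark for the pair $(b'',a')$ is also right.
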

\begin{proof}
    Any two points in $W$ are connected by an $(R\cup R^{-1})$-chain of length no more than $5$.
\end{proof}

Now we state the main theorem. Let 
\[\mathsf{KC} = \set{L \in \NExt\TL{K4} : L \text{ is Kripke complete}}\]
and
\[\mathsf{DEC} = \set{L \in \NExt\TL{K4} : L \text{ is decidable}}.\]

\begin{theorem}\label{thm:reduction-K4t-P}
    Let $P$ be a property and $\alpha \in \Lt$ be a formula such that $\gf\not\md\alpha$ and $\TL{K4}\oplus\alpha \in P$ and $P \sub \mathsf{KC} \cup \mathsf{DEC}$. Then $P$ is undecidable, i.e., the set $\set{\phi\in\Lt:\TL{K4}\oplus\phi \in P }$ is undecidable.
\end{theorem}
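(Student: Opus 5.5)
We follow the Chagrov--Shehtman scheme described in the introduction. For each configuration $\tup{t,k,l}$ we will effectively construct a formula $\beta(t,k,l)$ and set
\[
L(t,k,l) \coloneqq \TL{K4} \oplus \alpha \wedge \beta(t,k,l),
\]
so that:
\begin{enumerate}[(a)]
\item if $\tup{s,n,m}\rightsquigarrow\tup{t,k,l}$, then $\TL{K4}\oplus\alpha\vdash\beta(t,k,l)$, hence $L(t,k,l)=\TL{K4}\oplus\alpha\in P$;
\item if $\tup{s,n,m}\not\rightsquigarrow\tup{t,k,l}$, then $L(t,k,l)$ is neither Kripke complete nor decidable, hence $L(t,k,l)\notin P$ because $P\sub\mathsf{KC}\cup\mathsf{DEC}$.
\end{enumerate}
Since the map $\tup{t,k,l}\mapsto\alpha\wedge\beta(t,k,l)$ is computable, Theorem~\ref{Thm:undec-minsky} then yields undecidability of $P$.

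\textbf{Step 1: the formula $\beta$.} We build $\beta(t,k,l)$ in the standard shape
\[
\beta(t,k,l)=\chi\wedge\all{5}\psi_{\M}\wedge\is{5}\gamma(t,k,l)\to\is{5}\delta .
\]
Here:
\begin{itemize}
\item $\chi$ collects $\is{6}p\to\is{5}p$ together with variable-free formulas describing the three descending chains (via the frame formulas of the $c$-, $a$-, $b$-columns, using $a'$ and $b'$, $b''$ to distinguish the $a$- and $b$-chains from the $c$-chain, and depth formulas for $c_i,a_i,b_i$).
\item $\psi_{\M}$ expresses, for each instruction of $\M$, that a reflexive point seeing $c_t,a_k,b_l$ has a corresponding successor configuration point.
\item $\gamma(t,k,l)$ describes the point $\tup{t,k,l}$ seeing $c_t,a_k,b_l$.
\item $\delta$ forces the existence of the initial configuration point $\tup{s,n,m}$.
\end{itemize}
We would phrase $\beta$ so that its validity amounts to: "the described target configuration is connected, going backwards through instruction steps, to the start."

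For (a), given a computation $\tup{s,n,m}=d_0,\dots,d_r=\tup{t,k,l}$, we argue syntactically in $\TL{K4}$ plus substitution instances of $\alpha$ that each step $d_{i+1}$ propagates back to $d_i$. Since $r$ is fixed, finitely many uses of $\psi_{\M}$ suffice. Actually, with the antecedent formulated appropriately, $\beta$ should be derivable in $\TL{K4}$ alone for reachable configurations. In the Chagrov style this is the routine induction on the length of the computation.

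\textbf{Step 2: the non-reachable case.} First, $\gf\md\beta(t,k,l)$ for every configuration; the reason is that $\gf$ contains exactly the reachable configuration points, and Lemma~\ref{lem:FMsnm-basic} provides $\chi$. Since $\gf\not\md\alpha$, we get $\alpha\wedge\beta\notin$ … more to the point, we want a formula $\phi\notin L(t,k,l)$ valid in all Kripke frames of $L(t,k,l)$.

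Take the formula witnessing $\gf\not\md\alpha$ together with the facts about $\gf$ expressed by the antecedent. Any Kripke frame refuting this must contain a copy of the structure with the full $c$-chain. Because the valuation range is then unrestricted, the point $\tup{t,k,l}$ with full powerset yields a refutation of $\beta$ when $\tup{t,k,l}$ is unreachable. The mechanism is the one from Chagrov--Shehtman: in a Kripke frame one can valuate $q$ as an infinite-coinfinite subset of the $c$-chain, which breaks the simulation of backward computation. Hence $L(t,k,l)$ is Kripke incomplete, while $\gf$ shows the corresponding formula is not in $L(t,k,l)$.

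For undecidability, we vary the target over all unreachable configurations $\tup{t',k',l'}$. Then $\beta(t',k',l')\in L(t,k,l)$ iff $\tup{t',k',l'}$ is reachable, or $\tup{t',k',l'}=\tup{t,k,l}$ in the relevant sense, using $\gf$-like frames with $\tup{t,k,l}$ added to witness non-membership. A decision procedure for $L(t,k,l)$ would therefore decide reachability, contradicting Theorem~\ref{Thm:undec-minsky}.

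\textbf{Main obstacle.} The hardest step is designing $\beta$ in the transitive tense language so that it is exactly right on both sides. It must be provable for reachable configurations, and it must fail in a controlled way for unreachable ones, namely on general frames like $\gf$ (with $\tup{t,k,l}$ added) while holding on all Kripke frames validating $\alpha\wedge\beta$.

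Transitivity makes identifying individual points $c_t,a_k,b_l$ delicate, since past and future modalities see whole chains. The plan is to use the extra points $a'$, $b'$, $b''$ and the bounded-connectivity operators $\is{5}$ and $\all{5}$ to pin down these points by variable-free formulas. I would attempt this first and check the two directions on $\gf$.
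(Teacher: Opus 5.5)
\textbf{There is a genuine gap, and it lies in the architecture of the reduction, not in the details of $\beta$.} You set $L(t,k,l)=\TL{K4}\oplus\alpha\wedge\beta(t,k,l)$, so you always have $\TL{K4}\oplus\alpha\subseteq L(t,k,l)$. This causes two problems.

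First, $\gf$ refutes $\alpha$, so $\gf$ can never be a frame of $L(t,k,l)$. Yet your Step~2 appeals to $\gf$ exactly to witness non-membership in $L(t,k,l)$.

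Second, and worse, (b) is false for admissible choices of $\alpha$, so no choice of $\beta$ can rescue the approach at the generality of the theorem:
\begin{itemize}
\item With $\alpha=\bot$ (the case used for consistency), $L(t,k,l)=\Lt$ for every configuration.
\item With $\alpha=\B\bot\wedge\bb\bot$, every $L(t,k,l)$ extends the tabular logic $\Log(\bullet)$. It is therefore tabular, hence Kripke complete and decidable.
\end{itemize}

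\textbf{The missing idea is to reduce in the opposite direction.} The logics must sit \emph{below} $\TL{K4}\oplus\alpha$, with equality exactly when the target is reachable. The paper does this in three steps.
\begin{itemize}
\item Every auxiliary axiom is guarded by $\neg\alpha$. Hence $AxM$ and the $\phi_0$-axiom are trivially derivable in $\TL{K4}\oplus\alpha$.
\item It adds the axiom $(\neg\alpha\wedge\is{5}\sigma(s,\phi_{a_n},\phi_{b_m})\to\neg\alpha\wedge\is{5}\sigma(t,\phi_{a_k},\phi_{b_l}))\to\alpha$. If $\tup{t,k,l}$ is reachable, the antecedent follows from $AxM$ by Lemma~\ref{lem:AxM-works}, so $\alpha$ is derived. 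If it is unreachable, $\gf$ validates the whole logic while refuting $\alpha$, so $\alpha\notin L(t,k,l)$.
\item It adds the axiom $\neg\alpha\to\all{5}(\phi_{c_{t_0}}\to\phi_0)\wedge\is{5}\phi_{c_{t_0}}$, which yields Kripke incompleteness. Here $\phi_0$ is unsatisfiable on every transitive Kripke frame (Lemma~\ref{lem:no-frame-for-phi0}). It is nevertheless true at $c_{t_0}$ in $\gf$, because admissible sets there are finite or cofinite on the $c$-chain. So every Kripke frame of $L(t,k,l)$ validates $\alpha$, while $\gf$ does not.
\end{itemize}

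Undecidability of $L(t,k,l)$ also uses $\gf$ as the countermodel for unreachable targets. This requires a formula that $\gf$ refutes exactly for unreachable configurations. Your $\beta$ is claimed to be valid on $\gf$ for all configurations, so it cannot play that role.

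Your replacements for these two points have no concrete formula behind them:
\begin{itemize}
\item an infinite--coinfinite valuation that ``breaks the backward simulation'';
\item ``$\gf$-like frames with $\tup{t,k,l}$ added''.
\end{itemize}
In your setup those frames would also have to validate $\alpha$, which nothing guarantees.
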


The rest of this section is dedicated to proving this theorem. We start by introducing formulas that define a point or a set of points in $\gf$. Their meaning is summarized in Lemma~\ref{lem:FMsnm-ch}. For each $n\in\omega$, we define the formulas $\phi_{c_{n}}$, $\phi_{a_{n}}$ and $\phi_{b_{n}}$ as follows:
\begin{itemize}
    \item $\phi_{c_{0}}\coloneqq\B\bot\land\bb\bd\top$ and $\phi_{c_{k}}\coloneqq\D^k\phi_{c_{0}}\wedge\neg\D^{k+1}\phi_{c_{0}}$;
    \item $\phi_{a_{0}}\coloneqq\B\bot\wedge\bd(\bb\bot\wedge\B\bd^2\top)$ and $\phi_{a_{k}}\coloneqq(\phi_{a_{0}} \lor \D \phi_{a_{0}}) \land \B\neg\phi_{c_{0}}\land\bd\top \land \D^k\phi_{a_{0}}\wedge\neg\D^{k+1}\phi_{a_{0}}$;
    \item $\phi_{b_{0}}\coloneqq\B\bot\wedge\bd\bd\top\land\bd\D\bb^2\bot$ and $\phi_{b_{k}}\coloneqq(\phi_{b_{0}} \lor \D \phi_{b_{0}}) \land \B\neg\phi_{c_{0}} \land\bd\top \land \D^k\phi_{b_{0}}\wedge\neg\D^{k+1}\phi_{b_{0}}$. 
\end{itemize}
Note that these formulas are all variable-free. Intuitively, for each $x \in \set{a,b,c}$ and $n \in \omega$, the formula $\phi_{x_n}$ is designed to be true at exactly $x_n$ in $\gf$, regardless of valuations. Moreover, we define 
\begin{itemize}
    \item $\phi_{A} \coloneqq (\phi_{a_{0}} \lor \D \phi_{a_{0}}) \land \B\neg\phi_{c_{0}} \land \bd\top$;
    \item $\phi_{B} \coloneqq (\phi_{b_{0}} \lor \D \phi_{b_{0}}) \land \B\neg\phi_{c_{0}}\land\bd\top$.
\end{itemize}
Similarly, $\phi_{A}$ and $\phi_{B}$ are true exactly at $\set{a_n: n \in \omega}$ and $\set{b_n: n \in \omega}$ in $\gf$, respectively. To simulate the action of $+1$ and $-1$ on the two tapes of $\M$, we define the following formulas:
\begin{itemize}
    \item $\psi_{A} \coloneqq \phi_{A} \land p_A \land \lnot \D p_A$;
    \item $\psi^+_{A} \coloneqq \phi_{A} \land \D p_A \land \lnot\D\D p_A$;
    \item $\psi_B \coloneqq \phi_{B} \land p_B \land \lnot\D p_B$;
    \item $\psi^+_B \coloneqq \phi_{B} \land \D p_B \land \lnot\D\D p_B$,
\end{itemize}
where $p_A,p_B$ are fresh variables. Intuitively, if $\psi_A$ is true at some point, then the point must be $a_i$, where $i = \min\{j: a_j \models p_A\}$; then $\psi^+_A$ is true at the next point, namely, $a_{i+1}$. A similar intuition applies to $\psi_B$ and $\psi^+_B$ as well.
Moreover, a key syntactic observation is: if $s$ is the substitution $[\D^k \phi_{a_{0}} / p_A, \D^l \phi_{b_{0}} / p_B]$ for some $k, l \in \omega$, then $s(\psi_{A}) = \phi_{a_{k}}$, $s(\psi^+_{A}) = \phi_{a_{k+1}}$, $s(\psi_B) = \phi_{b_{l}}$, and $s(\psi^+_B) = \phi_{b_{l+1}}$. This will be used in Lemma~\ref{lem:AxM-works}.

Finally, for each state $t$ of $\M$ and formulas $\pi,\kappa\in\Lt$, we define:
\[\sigma(t,\pi,\kappa) \coloneqq \D\phi_{c_{t}} \wedge \B\neg\phi_{c_{t+1}} \wedge \D\pi \wedge \B\lnot(\D\pi \land \B\lnot\phi_{c_{0}}) \wedge \D\kappa \wedge \B\lnot(\D\kappa \land \B\lnot\phi_{c_{0}}).\]
As we will see in Lemma~\ref{lem:FMsnm-ch}, the formula $\sigma(t,\pi,\kappa)$ is true exactly at the point $\tup{t,k,l}$ if the formulas $\pi$ and $\kappa$ are true exactly at $a_k$ and $b_l$, respectively.

\begin{lemma}\label{lem:FMsnm-ch}
    For any $w\in W$, valuation $V$ on $\gf$, and $n\in\omega$, the following holds:
    \begin{enumerate}[(1)]
        \item $\gf,w\md\phi_{c_{n}}$ if and only if $w=c_n$.
        \item $\gf,w\md\phi_{a_{n}}$ if and only if $w=a_n$.
        \item $\gf,w\md\phi_{b_{n}}$ if and only if $w=b_n$.
        \item $\gf,w\md\phi_{A}$ if and only if $w \in \set{a_i : i \in \omega}$.
        \item $\gf,w\md\phi_{B}$ if and only if $w \in \set{b_i : i \in \omega}$.
        \item $\gf,V,w\md\psi_{A}$ if and only if $V(\psi_{A}) = \set{a_i} = \set{w}$ and $V(\psi^+_{A}) = \set{a_{i+1}}$ for some $i \in \omega$.
        \item $\gf,V,w\md\psi^+_{A}$ if and only if $V(\psi_{A}) = \set{a_i}$ and $V(\psi^+_{A}) = \set{a_{i+1}} = \set{w}$ for some $i \in \omega$.
        \item $\gf,V,w\md\psi_B$ if and only if $V(\psi_B) = \set{b_i} = \set{w}$ and $V(\psi^+_B) = \set{b_{i+1}}$ for some $i \in \omega$.
        \item $\gf,V,w\md\psi^+_B$ if and only if $V(\psi_B) = \set{b_i}$ and $V(\psi^+_B) = \set{b_{i+1}} = \set{w}$ for some $i \in \omega$.
        \item If $V(\pi) = \{a_k\}$ and $V(\kappa) = \{b_l\}$, then $\gf, V, w \md \sigma(t, \pi, \kappa)$ if and only if $w = \tup{t,k,l}$.
    \end{enumerate}
\end{lemma}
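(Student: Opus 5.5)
The plan is to compute, for each point $w \in W$, exactly which of the relevant formulas hold at $w$, using the explicit description of $R$. Since $\phi_{c_n},\phi_{a_n},\phi_{b_n},\phi_A,\phi_B$ are variable-free, their truth sets do not depend on $V$. So items (1)--(5) amount to checking these truth sets directly on the Kripke frame $(W,R)$.

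First I record the shape of $R$. Unfolding the transitive closure gives the following.
\begin{itemize}
\item $c_i R c_j$ iff $j<i$, and likewise $a_i R a_j$ iff $j<i$ and $b_i R b_j$ iff $j<i$.
\item $a'$ sees exactly the $a_j$, and $b'$ sees exactly the $b_j$ together with $b''$.
\item A configuration $\tup{t,k,l}$ sees itself, $c_j$ for $j\le t$, $a_j$ for $j\le k$, and $b_j$ for $j\le l$.
\item Nothing else holds; in particular nothing sees $a'$ or $b'$.
\end{itemize}

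For (1), the $\B\bot$-points are $c_0,a_0,b_0,b''$. Every predecessor of $c_0$ is either some $c_i$, which has predecessor $c_{i+1}$, or a reflexive configuration, so $\bb\bd\top$ holds at $c_0$. By contrast, $a_0$, $b_0$ and $b''$ have the predecessor $a'$ or $b'$, which has no predecessor, so $\bb\bd\top$ fails there.

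For $\phi_{c_k}$, the points satisfying $\D^k\phi_{c_0}$ are the $c_j$ with $j\ge k$ together with all configurations. Configurations satisfy $\D^{k+1}\phi_{c_0}$ by reflexivity, and $c_j$ with $j>k$ satisfies it as well. Hence only $c_k$ remains.

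Items (2)--(5) go the same way. For $\phi_{a_0}$ I check directly that $a_0$ works, since its predecessor $a'$ satisfies $\bb\bot\wedge\B\bd^2\top$: $a'$ has no predecessors, and $a_0$ has the predecessor $a_1$, which has the predecessor $a_2$. I also check that $c_0,b_0,b''$ fail $\phi_{a_0}$; for $b''$, for instance, the point $b'$ sees $b''$, which has only $b'$ as predecessor, so $\bd^2\top$ fails there. The verification of $\phi_{b_0}$ is analogous, using $b'\,R\,b''$ and $b''\models\bb^2\bot$.

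Next I identify the extra conjuncts of $\phi_A$ and $\phi_B$. The conjunct $\phi_{a_0}\vee\D\phi_{a_0}$ selects the $a_i$, $a'$ and the configurations. The conjunct $\B\neg\phi_{c_0}$ removes the configurations, since every configuration sees $c_0$, and $\bd\top$ removes $a'$. Then $\D^k\phi_{a_0}\wedge\neg\D^{k+1}\phi_{a_0}$ isolates $a_k$ on the chain.

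For (6)--(9) I use (4) and (5). If $w\models\psi_A$, then $w=a_i$ with $p_A$ true at $a_i$ and false at every $a_j$ with $j<i$. Hence $i=\min\set{j : a_j\models p_A}$, and this minimum is independent of $w$, which gives $V(\psi_A)=\set{a_i}$. Similarly, $a_j\models\D p_A\wedge\neg\D\D p_A$ iff the least $p_A$-point below $a_j$ is $a_{j-1}$. This forces $j=i+1$ and gives $V(\psi^+_A)=\set{a_{i+1}}$. The converse directions are immediate.

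For (10) I proceed conjunct by conjunct. The conjunct $\D\phi_{c_t}\wedge\B\neg\phi_{c_{t+1}}$ restricts $w$ to $c_{t+1}$ or to configurations with state exactly $t$. The conjunct $\D\pi$ requires $w$ to see $a_k$, which excludes $c_{t+1}$ and leaves the configurations $\tup{t,k',l'}$ with $k'\ge k$. The conjunct $\B\neg(\D\pi\wedge\B\neg\phi_{c_0})$ excludes $k'>k$, because then $a_{k'}$ is a successor that sees $a_k$ but not $c_0$. When $k'=k$, the only successors seeing $a_k$ are configurations, and these see $c_0$, so the conjunct holds. The same argument with $\kappa$ forces $l'=l$.

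The main obstacle is bookkeeping rather than ideas: computing the transitive closure correctly, remembering that configurations are reflexive so that they satisfy every $\D^j\phi_{c_0}$, and verifying the base formulas $\phi_{a_0},\phi_{b_0}$ against all four $\B\bot$-points. I would tabulate the predecessors and successors of $c_0,a_0,b_0,b''$ first and derive everything else from that table.
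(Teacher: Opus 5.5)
Your proposal is correct and follows essentially the paper's approach: you compute truth sets directly in the explicit frame, handle (6) via the least $p_A$-point on the $a$-chain, and handle (10) conjunct by conjunct (you compute the truth set of $\D^k\phi_{c_0}$ outright, where the paper inducts on $n$). One harmless bookkeeping slip: $a'$ sees only $a_0$, and $b'$ sees only $b_0$ and $b''$, not the whole chains; no later step relies on this, and your check of $\B\bd^2\top$ at $a'$ in fact uses the correct version.
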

\begin{proof}
    We only prove (1), (4), (6), and (10) and leave the rest to the readers. The proof of (1) proceeds by induction on $n$. Let $n=0$. The right-to-left direction is clear. Suppose $\gf,w\md\phi_{c_{0}}$. Then $R[w]=\ve$ and $R^{-1}[u]\neq\ve$ for all $u\in R^{-1}[w]$, which entails $w=c_0$. Let $n>0$. Suppose $\gf,w\md\phi_{c_{n}}$. By the induction hypothesis, $c_{n-1}\in R[w]\setminus R[R[w]]$, thus $w=c_n$, and (1) follows. 

    For (4), the right-to-left direction is straightforward. For the other direction, suppose $\gf,w \md \phi_{A}$. Then $\gf,w \md \phi_{a_{0}} \lor \D \phi_{a_{0}}$, which entails $w \in \inver{R}[a_0] \cup \set{a_0}$. Since $\gf,w \md \B\neg\phi_{c_{0}} \land \bd\top$, we see that $w \not\in \set{\tup{t,k,l}:\tup{s,n,m}\rightsquigarrow\tup{t,k,l}}$ and $w \neq a'$. Thus, $w \in \set{a_i : i \in \omega}$.
    
    For (6), the right-to-left direction is again straightforward. For the other direction, suppose $\gf,w \md \psi_{A}$. Take any $u \in W$ such that $\gf,V,u \md \psi_{A}$. By (4) $u = a_i$ for some $i \in \omega$. By $\gf,V,u \md p_A \wedge \lnot\D p_A$, we see that $u$ is an $R$-maximal point in $V(p_A) \cap \set{a_i : i \in \omega}$. Since $R$ is a linear order on $V(p_A) \cap \set{a_i : i \in \omega}$, we have $V(\psi_A) = \set{a_i} = \set{w}$. As $V(\D p_A \land \lnot\D\D p_A) \cap \set{a_i : i \in \omega} = \set{a_{i+1}}$, we have $V(\psi^+_{A}) = \set{a_{i+1}}$.

    For (10), suppose $V(\pi) = \{a_k\}$ and $V(\kappa) = \{b_l\}$. Then clearly, $\gf,V,\tup{t,k,l} \md \sigma(t,\pi,\kappa)$. For the other direction, suppose $\gf,V,w \md \sigma(t,\pi,\kappa)$. Then, $\gf,V,w \md \D\phi_{c_{t}} \wedge \D\pi$, which entails $w = \tup{t',k',l'}$ for some $\tup{t',k',l'}$. By $\gf,V,w \md \D\phi_{c_{t}} \wedge \B\neg\phi_{c_{t+1}}$, we have $w \in \inver{R}[c_{t}] \setminus \inver{R}[c_{t+1}]$ and so $t'=t$. Note that $\gf,V,a_{k+1} \md \D\pi \land \B\lnot\phi_{c_{0}}$. By $\gf,V,w \md \D\pi \wedge \B\lnot(\D\pi \land \B\lnot\phi_{c_{0}})$, we see that $w \in \inver{R}[a_{k}] \setminus \inver{R}[a_{k+1}]$ and so $k'=k$. Similarly, we obtain $l=l'$. Thus, $w = \tup{t,k,l}$.
\end{proof}

Next, we encode the behavior of the Minsky machine $\M$ by formulas. With each instruction $I$ in $\M$, we associate a formula $AxI$ as follows.
\begin{itemize}
    \item $AxI\coloneqq \neg\alpha \wedge \is{5}\sigma(t,\psi_{A},\psi_B) \to \neg\alpha \wedge \is{5}\sigma(t',\psi^+_{A},\psi_B)$, if $I=t\to\tup{t',1,0}$.
    \item $AxI\coloneqq \neg\alpha\wedge\is{5}\sigma(t,\psi_{A},\psi_B)\to\neg\alpha\wedge\is{5}\sigma(t',\psi_{A},\psi^+_B)$, if $I=t\to\tup{t',0,1}$.
    \item $AxI\coloneqq [\neg\alpha\wedge\is{5}\sigma(t,\psi^+_{A},\psi_B)\to\neg\alpha\wedge\is{5}\sigma(t',\psi_{A},\psi_B)]$\newline
    $\land [\neg\alpha\wedge\is{5}\sigma(t,\phi_{a_{0}},\psi_B)\to\neg\alpha\wedge\is{5}\sigma(t'',\phi_{a_{0}},\psi_B)]$, if $I=t\to\tup{t',-1,0} (\tup{t'',0,0})$.
    \item $AxI\coloneqq [\neg\alpha\wedge\is{5}\sigma(t,\psi_{A},\psi^+_B)\to\neg\alpha\wedge\is{5}\sigma(t',\psi_{A},\psi_B)]$\newline
    $\land [\neg\alpha\wedge\is{5}\sigma(t,\psi_{A},\phi_{b_{0}})\to\neg\alpha\wedge\is{5}\sigma(t'',\psi_{A},\phi_{b_{0}})]$, if $I=t\to\tup{t',0,-1} (\tup{t'',0,0})$.
\end{itemize}
Each formula encodes the behavior of the corresponding instruction. Let $AxM\coloneqq\bigwedge_{I \in M} AxI$. This is a well-defined formula since there are only finitely many instructions in $\M$.

\begin{lemma} \label{lem:AxM-works}
    For each configuration $\tup{t,k,l}$, if $\tup{s,n,m} \rightsquigarrow \tup{t,k,l}$, then 
    \[\lnot\alpha \land \is{5}\sigma(s,\phi_{a_{n}},\phi_{b_{m}}) \to \lnot\alpha \land \is{5}\sigma(t,\phi_{a_{k}},\phi_{b_{l}}) \in \TL{K4} \oplus AxM.\]
\end{lemma}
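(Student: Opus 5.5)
We argue by induction on the number of computation steps of $\M$ from $\tup{s,n,m}$ to $\tup{t,k,l}$. Write $\chi(t,k,l)$ for the formula $\lnot\alpha \land \is{5}\sigma(t,\phi_{a_{k}},\phi_{b_{l}})$. The claim is that $\chi(s,n,m)\to\chi(t,k,l)\in\TL{K4}\oplus AxM$ whenever $\tup{s,n,m}\rightsquigarrow\tup{t,k,l}$.

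\textbf{Base case.} If zero steps are used, then $\tup{t,k,l}=\tup{s,n,m}$, and the implication is the tautology $\chi(s,n,m)\to\chi(s,n,m)$.

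\textbf{Inductive step.} Suppose $\tup{s,n,m}\rightsquigarrow\tup{t,k,l}$ in $j$ steps and one further instruction $I$ yields $\tup{t^*,k^*,l^*}$. By the induction hypothesis, $\chi(s,n,m)\to\chi(t,k,l)$ is in the logic. Since the logic is closed under modus ponens, it suffices to show $\chi(t,k,l)\to\chi(t^*,k^*,l^*)\in\TL{K4}\oplus AxM$. Because $\M$ is deterministic, $I$ is the unique instruction acting on $t$, and $AxI$ is a conjunct of $AxM$. Hence $AxI\in\TL{K4}\oplus AxM$. The variables $p_A,p_B$ do not occur in $\alpha$: we may assume this by choosing them fresh for $\alpha$. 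So the logic is closed under substitution, and we apply a suitable substitution $s$ of the form $[\D^{k'}\phi_{a_{0}}/p_A,\D^{l'}\phi_{b_{0}}/p_B]$ to $AxI$. By the syntactic observation stated before the lemma, $s(\psi_A)=\phi_{a_{k'}}$, $s(\psi^+_A)=\phi_{a_{k'+1}}$, and likewise for $B$. Since $\alpha$, $\phi_{a_0}$, $\phi_{b_0}$ and the $\phi_{c_i}$ are unaffected by $s$, each substitution instance of a conjunct of $AxI$ has the form $\chi(\cdot)\to\chi(\cdot)$. The choice of $s$ depends on the instruction type:
\begin{itemize}
\item $I=t\to\tup{t',1,0}$: take $k'=k$, $l'=l$. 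Then $s(AxI)$ is exactly $\chi(t,k,l)\to\chi(t',k+1,l)$.
\item $I=t\to\tup{t',0,1}$: the symmetric choice $k'=k$, $l'=l$ gives $\chi(t,k,l)\to\chi(t',k,l+1)$.
\item $I=t\to\tup{t',-1,0}(\tup{t'',0,0})$ with $k\ge1$: take $k'=k-1$, $l'=l$. The first conjunct becomes $\chi(t,k,l)\to\chi(t',k-1,l)$, and the whole substituted conjunction implies it.
\item The same instruction with $k=0$: the second conjunct contains $\phi_{a_{0}}$ literally. Taking $l'=l$ (and $k'$ arbitrary) turns it into $\chi(t,0,l)\to\chi(t'',0,l)$, which is the required step.
\item The instruction decrementing the second register: handled symmetrically in both subcases.
\end{itemize}

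The proof is essentially bookkeeping. The only point needing care is to check that the substitution really commutes with $\sigma$ and $\is{5}$. This holds because both are defined by a uniform formula schema in their arguments, so that $s(\sigma(t,\pi,\kappa))=\sigma(t,s(\pi),s(\kappa))$, while $\lnot\alpha$ is left untouched.
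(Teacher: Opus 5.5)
Your proof is correct and follows the paper's argument: you induct on the length of the computation, obtain each step by substituting $\D^{k}\phi_{a_0}$ (or $\D^{k-1}\phi_{a_0}$) and $\D^{l}\phi_{b_0}$ for $p_A,p_B$ in $AxI$, and use the second conjunct of $AxI$ in the zero-test case. One small caveat, which the paper shares: when the exponent is $0$, $s(\psi_A)$ is only $\TL{K}$-provably equivalent to $\phi_{a_0}$ rather than literally identical to it (likewise for $\psi_B$), but this is harmless because normal logics are closed under replacement of provable equivalents.
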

\begin{proof}
    The proof proceeds by induction on the length of the computation of $\M$. Consider the computation of the form $\tup{s,n,m} \rightsquigarrow \tup{t,k,l} \to \tup{t',k',l'}$, where the last step is an application of $I$. As the induction hypothesis, we have 
    \[\lnot\alpha \land \is{5}\sigma(s,\phi_{a_{n}},\phi_{b_{m}}) \to \lnot\alpha \land \is{5}\sigma(t,\phi_{a_{k}},\phi_{b_{l}}) \in \TL{K4} \oplus AxM.\]
    We divide cases according to the shape of $I$. 
    
    Case (1): $I = t \to \tup{t',1,0}$. It suffices to show that 
    \[\lnot\alpha \land \is{5}\sigma(t,\phi_{a_{k}},\phi_{b_{l}}) \to \lnot\alpha \land \is{5}\sigma(t',\phi_{a_{k+1}},\phi_{b_{l}}) \in \TL{K4} \oplus AxM.\]
    This is clear by applying the substitution $[\D^k\phi_{a_{0}}/p_A, \D^l\phi_{b_{0}}/p_B]$ to $AxI$.

    Case (2): $I=t\to\tup{t',-1,0} (\tup{t'',0,0})$. Suppose $k \neq 0$. Then by applying to $AxI$ the substitution $[\D^{k-1}\phi_{a_{0}}/p_A, \D^l\phi_{b_{0}}/p_B]$, we have
    \[\lnot\alpha \land \is{5}\sigma(t,\phi_{a_{k}},\phi_{b_{l}}) \to \lnot\alpha \land \is{5}\sigma(t',\phi_{a_{k-1}},\phi_{b_{l}}) \in \TL{K4} \oplus AxM.\]
    If $k = 0$, then by applying to $AxI$ the substitution $[\D^l\phi_{b_{0}}/p_B]$, we have
    \[\lnot\alpha \land \is{5}\sigma(t,\phi_{a_{0}},\phi_{b_{l}}) \to \lnot\alpha \land \is{5}\sigma(t'',\phi_{a_{0}},\phi_{b_{l}}) \in \TL{K4} \oplus AxM.\]
    
    The cases $I=t\to\tup{t',0,1}$ and $I=t\to\tup{t',0,-1} (\tup{t'',0,0})$ follow similarly.
\end{proof}

\begin{lemma}\label{lem:FMsnm-validates-AxM}
    $\gf\md AxM$.
\end{lemma}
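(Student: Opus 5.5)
It suffices to show $\gf\md AxI$ for every instruction $I\in\M$. Fix a valuation $V$ on $\gf$ and a point $w\in W$, and assume the antecedent of (a conjunct of) $AxI$ holds at $w$. By Lemma~\ref{lem:FMsnm-basic}, any two points of $W$ are joined by a path of length at most $5$, so $\is{5}\chi$ is true at some point iff it is true at every point iff $V(\chi)\neq\ve$. The same holds for $\neg\alpha$ in the sense needed: $\neg\alpha$ is simply true at $w$, and it is still true at $w$ in the consequent. Thus the task reduces to the following. Whenever $V(\sigma(t,\pi,\kappa))\neq\ve$ for the antecedent pair, we must show $V(\sigma(t',\pi',\kappa'))\neq\ve$ for the consequent pair.

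Treat the increment case $I=t\to\tup{t',1,0}$ first. Suppose $\gf,V,u\md\sigma(t,\psi_A,\psi_B)$ for some $u$. Then $\D\psi_A$ and $\D\psi_B$ hold at $u$, so $V(\psi_A)$ and $V(\psi_B)$ are nonempty. By Lemma~\ref{lem:FMsnm-ch}(6),(8), $V(\psi_A)=\set{a_k}$, $V(\psi^+_A)=\set{a_{k+1}}$ and $V(\psi_B)=\set{b_l}$ for some $k,l$. Lemma~\ref{lem:FMsnm-ch}(10) then gives $u=\tup{t,k,l}$, and in particular $\tup{s,n,m}\rightsquigarrow\tup{t,k,l}$, since $u\in W$. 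Applying $I$ yields $\tup{s,n,m}\rightsquigarrow\tup{t',k+1,l}$, so this point lies in $W$. By Lemma~\ref{lem:FMsnm-ch}(10) again, applied with $\pi=\psi^+_A$ and $\kappa=\psi_B$, it satisfies $\sigma(t',\psi^+_A,\psi_B)$. Hence $\is{5}\sigma(t',\psi^+_A,\psi_B)$ holds at $w$. The case $t\to\tup{t',0,1}$ is symmetric, using (8) and (9).

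For $I=t\to\tup{t',-1,0}(\tup{t'',0,0})$ there are two conjuncts. In the first, a witness of $\sigma(t,\psi^+_A,\psi_B)$ forces $V(\psi^+_A)=\set{a_{k}}$ with $k\geq1$, $V(\psi_A)=\set{a_{k-1}}$ and $V(\psi_B)=\set{b_l}$, by (7) and (8). The witness is therefore $\tup{t,k,l}\in W$. The machine, being deterministic, moves to $\tup{t',k-1,l}$, which is reachable and satisfies $\sigma(t',\psi_A,\psi_B)$. In the second conjunct, $\phi_{a_0}$ is true exactly at $a_0$ by (2), so the witness is $\tup{t,0,l}$. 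The zero branch gives $\tup{t'',0,l}\in W$, which satisfies $\sigma(t'',\phi_{a_0},\psi_B)$. The second-register decrement is symmetric.

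The main point needing care is the step from the antecedent to the existence of a witnessing $\tup{t,k,l}$ in $W$ with the exact indices. It requires checking that nonemptiness of $V(\sigma(\dots))$ forces the relevant $V(\psi)$ sets to be nonempty singletons, so that Lemma~\ref{lem:FMsnm-ch}(10) applies. This follows because $\sigma$ contains the conjuncts $\D\pi$ and $\D\kappa$. The rest is bookkeeping, and the whole argument relies on $W$ containing exactly the configurations reachable from $\tup{s,n,m}$.
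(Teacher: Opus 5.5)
Your proof is correct and follows essentially the same route as the paper: you obtain the witness of $\sigma$ from its conjuncts $\D\pi\wedge\D\kappa$ together with Lemma~\ref{lem:FMsnm-ch}(6)--(9), identify it through (10) as a reachable configuration in $W$, apply the instruction, and use (10) again for the resulting configuration. You go slightly beyond the paper by writing out the decrement cases and by stating explicitly the diameter-$5$ connectivity of $\gf$ (from the proof of Lemma~\ref{lem:FMsnm-basic}), which is needed to conclude $\is{5}\sigma(\dots)$ at the original point $w$ and which the paper uses only implicitly.
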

\begin{proof}
    Take any instruction $I \in \M$. Suppose that $I$ has the form $t\to\tup{t',1,0}$. Take any point $w$ and any valuation $V$ on $\gf$ such that $\gf,V,w\md\neg\alpha\wedge\is{5}(\sigma(t,\psi_{A},\psi_B))$. Then $\gf,V,u\md\sigma(t,\psi_{A},\psi_B)$ for some $u\in W$. Since $\gf,V,u \md \D\psi_{A} \wedge \D\psi_{B}$, by Lemma~\ref{lem:FMsnm-ch}~(6) and (8), there exists $k, l \in \omega$ such that $V(\psi_{A}) = \set{a_k}$ and $V(\psi_{B}) = \set{b_l}$. By Lemma~\ref{lem:FMsnm-ch} (10), $u=\tup{t,k,l}$ and so $\tup{s,n,m}\rightsquigarrow\tup{t,k,l}$ by the definition of $W$. Then, since $I\in \M$, we have $\tup{s,n,m}\rightsquigarrow\tup{t',k+1,l}$. By Lemma~\ref{lem:FMsnm-ch} (7), (8) and (10), $\gf,V,\tup{t',k+1,l}\md\sigma(t',\psi^+_{A},\psi_B)$. So, we have $\gf,V,w \md \neg\alpha \wedge \is{5}(\sigma(t',\psi^+_{A},\psi_B))$. Thus, $\gf\md AxI$. 
    Similarly, $\gf\md AxI$ for instructions $I\in \M$ of other forms. Hence, $\gf\md AxM$.
\end{proof}

Now we construct the reduction, extending that in \cite{Chagrov.Shehtman1995}. For each configuration $\tup{t,k,l}$, we define the logic
\begin{align*}
    L(t,k,l) \coloneqq \TL{K4} & \oplus AxM \oplus (\lnot\alpha\land\is{5}\sigma(s,\phi_{a_{n}},\phi_{b_{m}})\to\lnot\alpha\land\is{5}\sigma(t,\phi_{a_{k}},\phi_{b_{l}}))\to\alpha\\
    & \oplus (\neg\alpha\to\all{5}(\phi_{c_{t_0}}\to\phi_0)\wedge\is{5}\phi_{c_{t_0}})
\end{align*}
where
\begin{center}
    $\phi_0=(\bb(\bb(p\to\bb p)\to p)\to\bb p) \wedge \bb((\B q \wedge \neg q) \to \bd(\B^2q \wedge \D\neg q)) \wedge \bd\B^{t_0+2}\bot$
\end{center}
and $p$ and $q$ are fresh variables w.r.t. $AxM$ and $\alpha$. Note that the axioms of $L(t,k,l)$ are computable from a configuration $\tup{t,k,l}$. The subsequent lemmas show some properties of $L(t,k,l)$. 
Intuitively, the formula $\phi_0$ is designed to be such that any extension of $\TL{K4}\oplus\phi_0$ is Kripke incomplete. More precisely, we have
\begin{lemma}\label{lem:no-frame-for-phi0}
    Let $\ff=(Y,S)$ be a Kripke frame such that $\ff \md \TL{K4}$. Then, $\ff,y\not\md\phi_0$ for any $y \in Y$.
\end{lemma}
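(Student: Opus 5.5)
The plan is to argue by contradiction. Assume $\ff=(Y,S)$ is transitive and $\ff,y\md\phi_0$, meaning each conjunct of $\phi_0$ holds at $y$ under every valuation. The three conjuncts have distinct jobs. The third, $\bd\B^{t_0+2}\bot$, supplies a starting point. The second, $\bb((\B q\wedge\neg q)\to\bd(\B^2q\wedge\D\neg q))$, is used repeatedly with cleverly chosen valuations of $q$ to generate an infinite $S$-descending chain of predecessors of $y$. The first, the Grzegorczyk-type formula $\bb(\bb(p\to\bb p)\to p)\to\bb p$, is then refuted on this chain by a suitable valuation of $p$. Since $p$ and $q$ are distinct variables, the valuations can be chosen independently.

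\textbf{Building the chain.} By the third conjunct there is $x_0$ with $x_0 S y$ and $\ff,x_0\md\B^{t_0+2}\bot$. Since $x_0$ has bounded future depth and $S$ is transitive, $x_0$ is irreflexive. Set $z_0=x_0$ and define inductively as long as $z_i$ is irreflexive and $z_i S y$.
\begin{itemize}
\item Put $V(q)=S[z_i]$ (the successors of $z_i$). Then $\B q\wedge\neg q$ holds at $z_i$.
\item By transitivity $z_i$ is a predecessor of $y$, so the second conjunct yields $z_{i+1}$ with $z_{i+1}Sz_i$, $z_{i+1}\md\B^2q$ and $z_{i+1}\md\D\neg q$.
\item Hence $z_{i+1}$ has a successor outside $S[z_i]$, while all its two-step successors lie in $S[z_i]$.
\end{itemize}
From these facts I want to extract three properties: $z_{i+1}\neq z_j$ for all $j\le i$, the irreflexivity of $z_{i+1}$, and $z_{i+1}Sy$. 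The last is immediate by transitivity. For irreflexivity, if $z_{i+1}Sz_{i+1}$ then every successor of $z_{i+1}$ would be a two-step successor, hence in $q$, contradicting $\D\neg q$. For distinctness, $z_j$ with $j\le i$ has all its successors inside $S[z_i]\cup\{\ldots\}$ by transitivity, which should contradict $\D\neg q$ in the same way. The outcome is an infinite sequence of pairwise distinct irreflexive points $z_0,z_1,\dots$ with $z_{i+1}Sz_i$ and $z_iSy$.

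\textbf{Refuting the first conjunct.} Let $V(p)=Y\setminus\{z_{2i}:i\in\omega\}$.
\begin{itemize}
\item Since $z_0Sy$ and $\neg p$ holds at $z_0$, we get $\ff,V,y\not\md\bb p$.
\item We check $\bb(\bb(p\to\bb p)\to p)$ at $y$. For any predecessor $u$ of $y$ where $p$ fails, $u=z_{2i}$ for some $i$.
\item Then $z_{2i+1}Sz_{2i}$, and $z_{2i+1}$ satisfies $p$ by distinctness. Also $z_{2i+2}Sz_{2i+1}$ with $\neg p$, so $z_{2i+1}$ satisfies $p\wedge\neg\bb p$.
\item Therefore $\bb(p\to\bb p)$ fails at $z_{2i}$, and the implication $\bb(p\to\bb p)\to p$ holds there.
\end{itemize}
So the antecedent holds at $y$ but $\bb p$ fails, which contradicts the first conjunct.

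\textbf{Main obstacle.} The delicate step is guaranteeing that the $z_i$ are pairwise distinct, and in particular that $z_{2i+1}\notin\{z_{2j}\}$; this is exactly why the specific form $\B^2q\wedge\D\neg q$ appears. If the bare choice $V(q)=S[z_i]$ does not immediately give distinctness from all earlier points, I would first try strengthening the valuation to $V(q)=S[z_i]\cup\{z_0,\dots,z_{i-1}\}$, or to $V(q)=S[z_i]$ restricted suitably. The target is that $\B^2q$ at $z_{i+1}$ forces every point two steps later into $q$, while some successor escapes $q$. An earlier $z_j$ cannot have this property, because all of its successors and two-step successors are already controlled by the transitivity of $S$ and by the construction.
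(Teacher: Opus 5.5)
Your proposal is correct and follows the paper's own argument: an irreflexive starting point from $\bd\B^{t_0+2}\bot$, the second conjunct under $V(q)=S[z_i]$ to build an infinite descending chain of irreflexive predecessors of $y$, and an alternating valuation that refutes the first conjunct. The distinctness you flag as the main obstacle is automatic, since transitivity gives $z_{i+1}Sz_j$ for all $j\le i$ and the $z_j$ are irreflexive; and your valuation $V(p)=Y\setminus\set{z_{2i}:i\in\omega}$ is actually the correct one, whereas the paper's literal $V(p)=\set{z_{2j}:j\in\omega}$ would leave $\bb(p\to\bb p)\to p$ false at any predecessor of $y$ that has no $z_j$ as an $S$-predecessor.
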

\begin{proof}
    Suppose that $\ff,y\md\phi_0$ for a contradiction. Since $\ff,y\md\bd\B^{t_0+2}\bot$, $\ff,b'\md\B^{t_0+2}\bot$ for some irreflexive $b'\in\inver{S}[y]$. Now take any irreflexive point $z\in\inver{S}[y]$. Let $V_z$ be a valuation on $\ff$ such that $V_z(q)=R[z]$. Then $\ff,z\md\B q\wedge\neg q$. By $\ff,y\md\bb((\B q\wedge\neg q)\to\bd(\B^2q\wedge\D\neg q))$, we have $\ff,z\md\bd(\B^2q\wedge\D\neg q)$ and so $\ff,V_z,z'\md\B^2q\wedge\D\neg q$ for some $z'\in\inver{S}[z]$. Then, $z'$ is again irreflexive because otherwise $\ff, V_z, z \md q$, and $z' \in \inver{S}[y]$ by the transitivity. Thus, there exists an infinite $\inver{S}$-chain $\set{z_i:i\in\omega}$ of irreflexive points in $\inver{S}[y]$. Let $V$ be a valuation on $\ff$ such that $V(p)=\set{z_{2j}:j\in\omega}$. Then we see that $\ff,V,y\not\md\bb(\bb(p\to\bb p)\to p)\to\bb p$, which contradicts $\ff,y\md\phi_0$.
\end{proof}

On the other hand, the following lemma holds:

\begin{lemma}\label{lem:at0-validates-phi0} 
    $\gf,c_{t_0}\md\phi_0$.
\end{lemma}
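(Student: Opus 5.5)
We verify the three conjuncts of $\phi_0$ at $c_{t_0}$ separately, for an arbitrary valuation $V$ on $\gf$. First we determine the past of $c_{t_0}$. The states of $\M$ are $0,\dots,t_0-1$, so a point $\tup{t,k,l}\in W$ sees only $c_t$ and the $c_j$ with $j<t<t_0$, never $c_{t_0}$. The points $a_i,b_i,a',b',b''$ do not see any $c_j$. Hence $R^{-1}$-predecessors of $c_{t_0}$, i.e. the points $x$ with $Rxc_{t_0}$, are exactly the $c_i$ with $i>t_0$. On these points $R$ is the strict order $Rc_ic_j\iff j<i$. So the operators $\bb,\bd$ evaluated at $c_{t_0}$ range only over $\set{c_i:i>t_0}$, and $\B,\D$ at such $c_i$ range over $\set{c_j:j<i}$.

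\emph{Third conjunct.} We check $\bd\B^{t_0+2}\bot$. The point $c_{t_0+1}$ is a predecessor of $c_{t_0}$. Every $R$-path from $c_{t_0+1}$ strictly decreases the index, so it has length at most $t_0+1$. Hence $\gf,c_{t_0+1}\md\B^{t_0+2}\bot$, and the conjunct holds at $c_{t_0}$.

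\emph{Second conjunct.} Let $i>t_0$ with $\gf,V,c_i\md\B q\wedge\neg q$. Then $c_j\in V(q)$ for all $j<i$, while $c_i\notin V(q)$. Consider the predecessor $c_{i+1}$ of $c_i$. Points reachable from $c_{i+1}$ in two $R$-steps are exactly the $c_j$ with $j<i$, so $\B^2q$ holds there. Moreover $Rc_{i+1}c_i$ with $c_i\notin V(q)$ gives $\D\neg q$. Thus $c_i\md\bd(\B^2q\wedge\D\neg q)$, and the conjunct holds at $c_{t_0}$. This step needs no admissibility assumption.

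\emph{First conjunct (main obstacle).} The Grzegorczyk-type formula $\bb(\bb(p\to\bb p)\to p)\to\bb p$ fails on the underlying Kripke frame, because the past of $c_{t_0}$ is an infinite ascending chain. This is exactly where we must use that $V(p)\in A$, so $V(p)$ is finite or cofinite on $\set{c_i:i\in\omega}$. Fix $N>t_0$ such that the truth value of $p$ is constant on $\set{c_j:j\geq N}$. Suppose $\gf,V,c_{t_0}\md\bb(\bb(p\to\bb p)\to p)$ but $c_{t_0}\not\md\bb p$, aiming for a contradiction. There are two cases.

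If $p$ is eventually true, take the largest $i>t_0$ with $c_i\notin V(p)$; it exists and $i<N$. Every predecessor $c_j$ ($j>i$) of $c_i$ satisfies $p$, and so do all of its predecessors. Hence $c_i\md\bb(p\to\bb p)$, and the hypothesis yields $c_i\md p$, a contradiction.

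If $p$ is eventually false, take $i=N$. Then $c_i\notin V(p)$, and no predecessor of $c_i$ satisfies $p$. So $\bb(p\to\bb p)$ holds vacuously at $c_i$, and again the hypothesis forces $c_i\md p$, a contradiction.

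Hence the first conjunct holds, and together with the other two we obtain $\gf,c_{t_0}\md\phi_0$.
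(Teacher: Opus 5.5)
Your proof is correct and follows the paper's argument. You use the same description of the past of $c_{t_0}$ as $\set{c_i : i > t_0}$, the same witnesses $c_{t_0+1}$ and $c_{i+1}$ for the last two conjuncts, and the same finite/cofinite restriction on admissible sets for the Grzegorczyk-type conjunct. The only difference is cosmetic: you split into the eventually-true and eventually-false cases and exhibit a point where the antecedent fails, whereas the paper shows that a failure would force $p$ and $\neg p$ to alternate infinitely often along the chain, contradicting admissibility.
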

\begin{proof}
    First, we show that $\gf,c_{t_0}\md\bb(\bb(p\to\bb p)\to p)\to\bb p$. Take any valuation $V$ on $\gf$. Then $V(p) \in A$, so either $V(p) \cap \set{c_i : i \in \omega}$ is finite or $\set{c_i:i<\omega}\setminus V(p)$ is finite. Let $\mm=(\gf,V)$. Suppose $\mm,c_{t_0}\not\md\bb(\bb(p\to\bb p)\to p)\to\bb p$ for a contradiction. Then $\mm,c_{t_0}\md\bb(\bb(p\to\bb p)\to p)$ and $\mm, c_{t_0} \md \bd \lnot p$. Thus, $\inver{R}[c_{t_0}]\setminus V(p) \neq \emptyset$, and for every $w\in\inver{R}[c_{t_0}]\setminus V(p)$, we have $\mm,w\md\bd(p\wedge\bd\neg p)$. Since $\inver{R}[c_{t_0}]=\set{c_i:i>t_0}$, it follows that both $\set{c_i:i<\omega}\cap V(p)$ and $\set{c_i:i<\omega}\setminus V(p)$ are infinite, which is a contradiction. Thus, $\gf,c_{t_0}\md\bb(\bb(p\to\bb p)\to p)\to\bb p$.
    
    To show that $\gf,c_{t_0}\md\bb((\B q\wedge\neg q)\to\bd(\B^2q \wedge \D\neg q))$. Take any point $w\in\inver{R}[c_{t_0}]$ and any valuation $V$ in $\gf$. Note that $\inver{R}[c_{t_0}] = \set{c_i: i > t_0}$ since $\M$ contains $t_0$ many states labeled as $0, \dots, t_0-1$. Then $w = c_i$ for some $i>t_0$. Let $\mm=(\gf,V)$. Suppose $\mm,c_i \md \B q\wedge\neg q$. Then $\mm,c_{i+1}\md\B^2q\wedge\D\neg q$, which entails $\mm,c_{i}\md\bd(\B^2q\wedge\D\neg q)$. Thus, $\gf,c_i\md(\B q\wedge\neg q)\to\bd(\B^2q\wedge\D\neg q)$, and so $\gf,c_{t_0}\md\bb((\B q\wedge\neg q)\to\bd(\B^2q\wedge\D\neg q))$.
    
    Finally, $\gf,c_{t_0}\md\bd\Box^{t_0+2}\bot$ follows from $\gf,c_{t_0+1}\md\Box^{t_0+2}\bot$. Thus, we conclude that $\gf, c_{t_0} \md \phi_0$.
\end{proof}

The following two lemmas summarize how the reduction works.

\begin{lemma}\label{lem:tkl-notin-T-implies-FMsnm-validates-Ltkl}
    Let $\tup{t,k,l}$ be a configuration such that $\tup{s,n,m}\not\rightsquigarrow\tup{t,k,l}$. Then the following hold.
    \begin{enumerate}
        \item $\gf\md L(t,k,l)$.
        \item $L(t,k,l)$ is Kripke incomplete.
        \item $L(t,k,l)$ is undecidable.
    \end{enumerate}   
\end{lemma}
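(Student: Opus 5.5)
Item (1) reduces to checking the four groups of axioms on $\gf$ one at a time. $\gf$ is transitive, so it validates $\TL{K4}$, and Lemma~\ref{lem:FMsnm-validates-AxM} gives $\gf\md AxM$. For the third axiom, fix a valuation $V$ and a point $w$ with $w\not\md\alpha$ and $w\md\is{5}\sigma(s,\phi_{a_{n}},\phi_{b_{m}})$; it suffices to show that $w\not\md\is{5}\sigma(t,\phi_{a_{k}},\phi_{b_{l}})$. By Lemma~\ref{lem:FMsnm-ch}(2),(3), the formulas $\phi_{a_k}$ and $\phi_{b_l}$ hold exactly at $a_k$ and $b_l$. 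Lemma~\ref{lem:FMsnm-ch}(10) then says $\sigma(t,\phi_{a_k},\phi_{b_l})$ can only be true at the point $\tup{t,k,l}$. That point is not in $W$ because $\tup{s,n,m}\not\rightsquigarrow\tup{t,k,l}$, so $\sigma(t,\phi_{a_k},\phi_{b_l})$ is false everywhere. Hence the implication inside the axiom is false at $w$, and the axiom holds there. For the fourth axiom, take any point $w$ with $w\not\md\alpha$. Lemma~\ref{lem:FMsnm-ch}(1) says $\phi_{c_{t_0}}$ is true exactly at $c_{t_0}$. Lemma~\ref{lem:FMsnm-basic} (or the diameter bound in its proof) shows that $c_{t_0}\in\R^5[w]$, which gives $\is{5}\phi_{c_{t_0}}$. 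Lemma~\ref{lem:at0-validates-phi0} gives $c_{t_0}\md\phi_0$ under every valuation, and with this $\all{5}(\phi_{c_{t_0}}\to\phi_0)$ follows. I would double-check the exact indexing of $\R^5$ against the length-$5$ chain claim.

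For (2), I would argue directly. Since $\gf\not\md\alpha$, pick $V$ and $w$ with $w\not\md\alpha$. By (1), $L(t,k,l)\sub\Log(\gf)$, so $\alpha\notin L(t,k,l)$. Now let $\ff$ be any Kripke frame validating $L(t,k,l)$, and suppose $\ff,V,y\md\lnot\alpha$. The last axiom gives $y\md\is{5}\phi_{c_{t_0}}\wedge\all{5}(\phi_{c_{t_0}}\to\phi_0)$, so some point $z$ satisfies $\phi_{c_{t_0}}$ and $\phi_0$ under $V$. The formula $\phi_0$ contains the fresh variables $p,q$. Since $\phi_{c_{t_0}}$ is variable-free and $p,q$ do not occur in $\alpha$, we can re-choose $V(p),V(q)$ freely without affecting $\lnot\alpha$ at $y$ or $\phi_{c_{t_0}}$ at $z$. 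So $\ff,z\md\phi_0$ as a frame, contradicting Lemma~\ref{lem:no-frame-for-phi0}. Hence $\ff\md\alpha$ for every Kripke frame of $L(t,k,l)$, and so $\alpha\in\Log(\Fr(L(t,k,l)))\setminus L(t,k,l)$.

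For (3), I would use a reduction of the same style. The aim is to show, for arbitrary configurations $\tup{t',k',l'}$, that
\[\chi(t',k',l') \coloneqq \lnot\alpha\land\is{5}\sigma(s,\phi_{a_n},\phi_{b_m})\to\lnot\alpha\land\is{5}\sigma(t',\phi_{a_{k'}},\phi_{b_{l'}})\in L(t,k,l)\iff\tup{s,n,m}\rightsquigarrow\tup{t',k',l'}.\]
The direction ($\Leftarrow$) is Lemma~\ref{lem:AxM-works}. For ($\Rightarrow$), suppose $\tup{s,n,m}\not\rightsquigarrow\tup{t',k',l'}$ and refute $\chi(t',k',l')$ on $\gf$, which validates $L(t,k,l)$ by (1). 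Take a valuation and a point $w$ refuting $\alpha$. The point $\tup{s,n,m}$ lies in $W$ and satisfies $\sigma(s,\phi_{a_n},\phi_{b_m})$ by Lemma~\ref{lem:FMsnm-ch}(10), and it is within distance $5$ of $w$. So the antecedent of $\chi(t',k',l')$ holds at $w$. The consequent fails, since $\sigma(t',\dots)$ holds nowhere because $\tup{t',k',l'}\notin W$. Hence a decision procedure for $L(t,k,l)$ would decide the undecidable set of Theorem~\ref{Thm:undec-minsky}. The main obstacle is this ($\Rightarrow$) step: it depends on $\alpha$ being refutable at a point close enough to $\tup{s,n,m}$. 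This is guaranteed because $W$ has $\R$-diameter at most $5$, so any refuting point works.
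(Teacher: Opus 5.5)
Your proposal follows the paper's proof in all three parts: the same axiom-by-axiom check on $\gf$ for (1), the same fresh-variable argument ending in Lemma~\ref{lem:no-frame-for-phi0} for (2), and the same reduction $(\dag)$ for (3). There is one logical slip in (1), in the third axiom.

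That axiom has the form $(\lnot\alpha\land X\to\lnot\alpha\land Y)\to\alpha$, with $X=\is{5}\sigma(s,\phi_{a_n},\phi_{b_m})$ and $Y=\is{5}\sigma(t,\phi_{a_k},\phi_{b_l})$. At a point $w$ refuting $\alpha$ you therefore need both $w\md X$ and $w\not\md Y$. Your ``it suffices'' step assumes $w\md X$ and proves only $w\not\md Y$. If some $\lnot\alpha$-point failed $X$, the inner implication would be vacuously true there and the axiom would fail.

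The missing fact is that $X$ is valid on $\gf$. This holds because $\tup{s,n,m}\rightsquigarrow\tup{s,n,m}$ in zero steps, so $\tup{s,n,m}\in W$. By Lemma~\ref{lem:FMsnm-ch}(2),(3),(10), this point satisfies $\sigma(s,\phi_{a_n},\phi_{b_m})$, and by the diameter bound it lies in $\R^5[w]$ for every $w$. This is exactly how the paper handles the axiom. You already use the same observation in your part (3), so the repair is immediate.
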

\begin{proof}
    For (1), we already know from Lemma~\ref{lem:FMsnm-validates-AxM} that $\gf\md AxM$. Since $\tup{t,k,l} \notin W$, by Lemma~\ref{lem:FMsnm-ch} (2), (3), and (10), we have $\gf\md\neg\sigma(t,\phi_{a_{k}},\phi_{b_{l}})$ and so $\gf\md\neg\is{5}\sigma(t,\phi_{a_{k}},\phi_{b_{l}})$. Similarly, since $\tup{s,n,m}\rightsquigarrow\tup{s,n,m}$, we have $\gf,\tup{s,n,m}\md\sigma(s,\phi_{a_{n}},\phi_{b_{m}})$. By Lemma~\ref{lem:FMsnm-basic}, $\gf\md\is{5}\sigma(s,\phi_{a_{n}},\phi_{b_{m}})$. Take any $w \in W$ and valuation $V$ in $\gf$. If $\gf, V, w \not\md \alpha$, then $\gf, V, w \md \lnot\alpha \land \is{5}\sigma(s,\phi_{a_{n}},\phi_{b_{m}}) \land \lnot\is{5}\sigma(t,\phi_{a_{k}},\phi_{b_{l}})$, i.e. $\gf,V,w \not\md \lnot\alpha\land\is{5}\sigma(s,\phi_{a_{n}},\phi_{b_{m}})\to\lnot\alpha\land\is{5}\sigma(t,\phi_{a_{k}},\phi_{b_{l}})$. Thus, $\gf\md(\lnot\alpha\land\is{5}\sigma(s,\phi_{a_{n}},\phi_{b_{m}})\to\lnot\alpha\land\is{5}\sigma(t,\phi_{a_{k}},\phi_{b_{l}}))\to\alpha$. 
    By Lemmas~\ref{lem:at0-validates-phi0} and \ref{lem:FMsnm-ch} (1), we have $\gf\md\phi_{c_{t_0}}\to\phi_0$ and so $\gf\md\all{5}(\phi_{c_{t_0}}\to\phi_0)$. By Lemmas~\ref{lem:FMsnm-basic} and \ref{lem:FMsnm-ch} (1), we have $\gf\md\is{5}\phi_{c_{t_0}}$. Thus, $\gf\md\all{5}(\phi_{c_{t_0}}\to\phi_0)\wedge\is{5}\phi_{c_{t_0}}$. Hence, $\gf\md L(t,k,l)$.

    For (2), we first recall that $\gf \not\md \alpha$ follows from the assumption. By (1), we obtain that $\gf \md L(t,k,l)$ and so $\alpha \notin L(t,k,l)$. Thus, it suffices now to show that $\alpha\in\Log(\Fr(L(t,k,l)))$. Take any Kripke frame $\ff=(Y,S)\in\Fr(L(t,k,l))$. Suppose $\ff\not\md\alpha$. Then $\ff,V,w\md\neg\alpha$ for some $w\in Y$ and valuation $V$ on $\ff$. Since $\ff\md L(t,k,l)$, we have $\ff,w\md\neg\alpha\to\all{5}(\phi_{c_{t_0}}\to\phi_0)\wedge\is{5}\phi_{c_{t_0}}$. Thus, for any valuation $V'$ that agrees with $V$ on the variables occurring in $\alpha$, we have $\ff,V',w\md\neg\alpha$ and so $\ff,V',w\md\all{5}(\phi_{c_{t_0}}\to\phi_0)\wedge\is{5}\phi_{c_{t_0}}$. Note that $\phi_0$ and $\alpha$ have no common variable and $\phi_{c_{t_0}}$ is variable-free. So, we have $\ff,w\md\all{5}(\phi_{c_{t_0}}\to\phi_0)\wedge\is{5}\phi_{c_{t_0}}$. Since $\phi_{c_{t_0}}$ is variable-free, there exists $u\in S_\sharp^5[w]$ such that $\ff,u\md\phi_{c_{t_0}}$, and so $\ff,u\md\phi_0$, which contradicts Lemma~\ref{lem:no-frame-for-phi0}. Thus, $\ff \md \alpha$, and we conclude $\alpha\in\Log(\Fr(L(t,k,l)))$.

    For (3), it suffices to show that the following holds for any configuration $\tup{t',k',l'}$,
    \begin{center}
        (\dag)  $\tup{s,n,m} \rightsquigarrow \tup{t',k',l'}$ if and only if $\neg\alpha \wedge \is{5}\sigma(s,\phi_{a_{n}},\phi_{b_{m}}) \to \neg\alpha \wedge \is{5}\sigma(t',\phi_{a_{k'}},\phi_{b_{l'}}) \in L(t,k,l)$.
    \end{center}
    This yields a reduction from the problem $\set{\tup{t',k',l'}: \tup{s,n,m} \rightsquigarrow \tup{t',k',l'}}$ to the decision problem for $L(t,k,l)$, which implies the undecidability of $L(t,k,l)$. The left-to-right direction of (\dag) follows from Lemma~\ref{lem:AxM-works}. Suppose $\tup{s,n,m} \not\rightsquigarrow \tup{t',k',l'}$. Since $\gf \not\md \alpha$, similar to the proof of (1), we see that $\gf \not\md \neg\alpha \wedge \is{5}\sigma(s,\phi_{a_{n}},\phi_{b_{m}}) \to \neg\alpha \wedge \is{5}\sigma(t',\phi_{a_{k'}},\phi_{b_{l'}})$. By (1) $\gf \md L(t,k,l)$, we have $\neg\alpha \wedge \is{5}\sigma(s,\phi_{a_{n}},\phi_{b_{m}}) \to \neg\alpha \wedge \is{5}\sigma(t',\phi_{a_{k'}},\phi_{b_{l'}}) \notin L(t,k,l)$, which concludes the other direction.
\end{proof}

\begin{lemma} \label{Lem:reduction-K4t}
    For each configuration $\tup{t,k,l}$, if $\tup{s,n,m}\rightsquigarrow\tup{t,k,l}$ then $L(t,k,l)= \TL{K4} \oplus \alpha$.
\end{lemma}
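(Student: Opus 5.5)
We prove the two inclusions $L(t,k,l)\sub\TL{K4}\oplus\alpha$ and $\TL{K4}\oplus\alpha\sub L(t,k,l)$ separately, working purely syntactically inside $L(t,k,l)$.

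\textbf{Step 1: $\alpha\in L(t,k,l)$.} Suppose $\tup{s,n,m}\rightsquigarrow\tup{t,k,l}$. By Lemma~\ref{lem:AxM-works}, the formula
\[\chi\coloneqq\lnot\alpha\land\is{5}\sigma(s,\phi_{a_{n}},\phi_{b_{m}})\to\lnot\alpha\land\is{5}\sigma(t,\phi_{a_{k}},\phi_{b_{l}})\]
belongs to $\TL{K4}\oplus AxM\sub L(t,k,l)$. One of the axioms of $L(t,k,l)$ is exactly $\chi\to\alpha$, so modus ponens gives $\alpha\in L(t,k,l)$, and hence $\TL{K4}\oplus\alpha\sub L(t,k,l)$.

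\textbf{Step 2: every axiom of $L(t,k,l)$ lies in $\TL{K4}\oplus\alpha$.} Each remaining axiom either has the form $\neg\alpha\wedge\ldots\to\ldots$ or is built from such implications, and all of these become trivial once $\alpha$ is available.
\begin{itemize}
\item Each $AxI$ is a conjunction of implications whose antecedents contain the conjunct $\neg\alpha$. Since $\alpha\in\TL{K4}\oplus\alpha$, the formula $\neg\alpha\wedge\theta\to\theta'$ is a propositional consequence of $\alpha$ for any $\theta,\theta'$. Thus $AxM\in\TL{K4}\oplus\alpha$.
\item The axiom $\chi\to\alpha$ follows from $\alpha$ by propositional logic.
\item The axiom $\neg\alpha\to\all{5}(\phi_{c_{t_0}}\to\phi_0)\wedge\is{5}\phi_{c_{t_0}}$ likewise follows propositionally from $\alpha$.
\end{itemize}
All three cases are tautological consequences of $\alpha$. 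No necessitation is needed here: substitution instances are irrelevant, because $\TL{K4}\oplus\alpha$ is closed under substitution and we only need the displayed axiom formulas themselves, which are derived from $\alpha$ by modus ponens with tautologies. Since $L(t,k,l)$ is the least normal extension of $\TL{K4}$ containing these axioms, we get $L(t,k,l)\sub\TL{K4}\oplus\alpha$.

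The only substantive step is Step 1, which rests entirely on Lemma~\ref{lem:AxM-works}; everything else is routine propositional bookkeeping. The one point to check is that Lemma~\ref{lem:AxM-works} is applied with the same $\alpha$ and the same fixed initial configuration $\tup{s,n,m}$ as in the definition of $L(t,k,l)$, which holds by construction.
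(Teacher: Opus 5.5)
Your proposal is correct and follows essentially the same approach as the paper: the inclusion $L(t,k,l)\sub\TL{K4}\oplus\alpha$ holds because every extra axiom is a propositional consequence of $\alpha$, and the converse comes from Lemma~\ref{lem:AxM-works} together with modus ponens on the axiom $\chi\to\alpha$. You simply spell out the inclusion that the paper dismisses as ``clear.''
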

\begin{proof}
    Let $L = \TL{K4} \oplus \alpha$. It is clear that $L(t,k,l)\sub L$. Suppose that $\tup{s,n,m}\rightsquigarrow\tup{t,k,l}$. By Lemma~\ref{lem:AxM-works}, $\lnot\alpha\land\is{5}\sigma(s,\phi_{a_{n}},\phi_{b_{m}})\to\lnot\alpha\land\is{5}\sigma(t,\phi_{a_{k}},\phi_{b_{l}})\in \TL{K4}\oplus AxM$. Thus, we obtain $\alpha\in L(t,k,l)$, since $(\lnot\alpha\land\is{5}\sigma(s,\phi_{a_{n}},\phi_{b_{m}})\to\lnot\alpha\land\is{5}\sigma(t,\phi_{a_{k}},\phi_{b_{l}}))\to\alpha \in L(t,k,l)$, and thus $L(t,k,l)=L$.
\end{proof}

Now we are ready to prove the main theorem. 

\begin{proof}[Proof of Theorem~\ref{thm:reduction-K4t-P}]
    Let $P$ be a property and $\alpha \in \Lt$ be a formula such that $\gf\not\md\alpha$ and $\TL{K4}\oplus\alpha \in P$ and $P \sub \mathsf{KC} \cup \mathsf{DEC}$. The reduction $\tup{t,k,l} \mapsto L(t,k,l)$ satisfies the following:
    \begin{itemize}
        \item If $\tup{s,n,m}\rightsquigarrow\tup{t,k,l}$, then $L(t,k,l)= \TL{K4} \oplus \alpha$ by Lemma~\ref{Lem:reduction-K4t}, so $L(t,k,l) \in P$,
        \item If $\tup{s,n,m}\not\rightsquigarrow\tup{t,k,l}$, then $L(t,k,l) \notin \mathsf{KC} \cup \mathsf{DEC}$ by Lemma~\ref{lem:tkl-notin-T-implies-FMsnm-validates-Ltkl}, so $L(t,k,l) \notin P$.
    \end{itemize}
    Thus, we obtain a reduction from the set $\set{\tup{t,k,l}:\tup{s,n,m}\rightsquigarrow\tup{t,k,l}}$, which is undecidable, to the set $\set{\phi\in\Lt:\TL{K4}\oplus\phi \in P}$, which is therefore also undecidable. 
\end{proof}

As a corollary of Theorem~\ref{thm:reduction-K4t-P}, we obtain the following undecidable properties in $\NExt\TL{K4}$.

\begin{corollary} \label{Thm:undec-properties}
    The following sets are undecidable:
    \begin{enumerate}
        \item $\set{\phi\in\Lt:\TL{K4}\oplus\phi \text{ is Kripke complete}}$,
        \item $\set{\phi\in\Lt:\TL{K4}\oplus\phi \text{ is canonical}}$,
        \item $\set{\phi\in\Lt:\TL{K4}\oplus\phi \text{ is first-order definable}}$,
        \item $\set{\phi\in\Lt:\TL{K4}\oplus\phi \text{ has the FMP}}$,
        \item $\set{\phi\in\Lt:\TL{K4}\oplus\phi \text{ is locally tabular}}$,
        \item $\set{\phi\in\Lt:\TL{K4}\oplus\phi \text{ is tabular}}$,
        \item $\set{\phi\in\Lt:\TL{K4}\oplus\phi \text{ is decidable}}$,
        \item $\set{\phi\in\Lt:\TL{K4}\oplus\phi = L}$, where $L$ is an arbitrarily fixed tabular logic,
        \item $\set{\phi\in\Lt:\TL{K4}\oplus\phi \text{ is consistent}}$.
    \end{enumerate}
\end{corollary}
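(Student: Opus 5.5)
The plan is to derive each item from Theorem~\ref{thm:reduction-K4t-P} by choosing a suitable formula $\alpha$. Item (9) is the exception: there I will reuse the reduction $\tup{t,k,l}\mapsto L(t,k,l)$ directly. The natural uniform choice for $\alpha$ is the axiom of the one-point irreflexive frame,
\[\alpha_\bullet \coloneqq \B\bot\wedge\bb\bot .\]
Since $\gf$ contains the edge $\pair{c_1,c_0}$, we have $\gf,c_1\not\md\B\bot$, so $\gf\not\md\alpha_\bullet$. Moreover $\TL{K4}\oplus\alpha_\bullet$ is exactly the logic of the single irreflexive point $\bullet$. Indeed, every formula is equivalent modulo $\B\bot\wedge\bb\bot$ to a Boolean combination of variables, so $\TL{K4}\oplus\alpha_\bullet=\Log(\bullet)$.

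First, I check that $\Log(\bullet)$ has each property in items (1)--(7). It is tabular, hence locally tabular and hence has the FMP. It is decidable. Its frames are exactly the Kripke frames with empty $R$, so it is canonical and defined by the first-order sentence $\forall x\forall y\,\neg Rxy$.

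Second, I check $P\sub\mathsf{KC}\cup\mathsf{DEC}$ for each of these properties. Canonicity, first-order definability, the FMP and tabularity each imply Kripke completeness directly from their definitions, since each makes $L$ the logic of some class of Kripke frames. Local tabularity implies the FMP by the standard argument: finitely generated subalgebras of the Lindenbaum algebra are finite, and finite tense algebras are dual to finite frames. Decidability is $\mathsf{DEC}$ itself. So Theorem~\ref{thm:reduction-K4t-P} gives items (1)--(7).

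For item (8), fix a tabular $L\in\NExt\TL{K4}$. If $L$ is not transitive, the set is empty and there is nothing to prove, so I read the statement as intended for $L\in\NExt\TL{K4}$. I take $P=\set{L}$. Then $P\sub\mathsf{KC}$, since $L$ is tabular. I need a formula $\alpha$ with $\TL{K4}\oplus\alpha=L$ and $\gf\not\md\alpha$. This is the step I expect to be the main obstacle.

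For finite axiomatizability, I would appeal to the known fact that tabular logics in $\NExt\TL{K4}$ are finitely axiomatizable. Concretely, one can take the bounded-depth and bounded-width axioms together with the Jankov--Fine style formulas excluding the finitely many non-subframes. If no citation is available, I would reprove this along the lines of the unimodal $\ML{K4}$ argument.

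For $\gf\not\md\alpha$: $L$ is the logic of a finite transitive frame of some depth $d$, so $L$ contains the depth-bounding formula $\mathrm{bd}_{d}$. This formula is refuted on $\gf$ at $c_{d+1}$, using the variable-free chain $c_{d+1}Rc_d R\cdots Rc_0$ of irreflexive points and valuations that are finite on the $c_i$, which are admissible in $A$. Hence $\gf\not\md\alpha$, and the theorem applies.

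For item (9), consistency is not contained in $\mathsf{KC}\cup\mathsf{DEC}$, so I run the reduction with $\alpha=\bot$, noting $\gf\not\md\bot$. If $\tup{s,n,m}\rightsquigarrow\tup{t,k,l}$, then Lemma~\ref{Lem:reduction-K4t} gives $L(t,k,l)=\TL{K4}\oplus\bot$, which is inconsistent. Otherwise, Lemma~\ref{lem:tkl-notin-T-implies-FMsnm-validates-Ltkl}(1) gives $\gf\md L(t,k,l)$, so $\bot\notin L(t,k,l)$ and the logic is consistent. Thus reachability reduces computably to inconsistency, the complement of consistency, and the set in item (9) is undecidable.
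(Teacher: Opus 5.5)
Your proposal is correct and follows the paper's proof: $\alpha=\B\bot\wedge\bb\bot$ for (1)--(7), the known finite axiomatizability of tabular logics in $\NExt\TL{K4}$ for (8), and $\alpha=\bot$ with complementation for (9). There are two local differences.

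In (8), the paper gets $\gf\not\md\alpha$ from two facts: every extension of $L$ is tabular, and $\Log(\gf)$ is not. You instead refute a depth-bounding formula of $L$ directly along the irreflexive chain of $c_i$'s, using finite and hence admissible valuations. This is more self-contained, and in effect it supplies the reason $\Log(\gf)$ is non-tabular.

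In (9), the paper simply applies Theorem~\ref{thm:reduction-K4t-P} to the property of being inconsistent. That is allowed because $\Lt$ is Kripke complete ($\Fr(\Lt)=\ve$, so $\Log(\Fr(\Lt))=\Lt$) and decidable. Your hand-run of the reduction is equivalent.

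Two things need correcting.

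First, the concrete axiomatization you sketch for (8) does not carry over from $\ML{K4}$. Its key finiteness step is that only finitely many rooted frames have bounded depth and width. This has no tense analogue with ``connected'' in place of ``rooted''. For instance, $R=\set{(x_i,y_i),(x_{i+1},y_i):i\in\omega}$ is transitive and has depth $2$ and width $2$ in both directions, yet it is connected and infinite. So its finite segments are infinitely many finite connected frames obeying those bounds. You would at least need a global bound on component diameter, such as $\is{N+1}p\to\is{N}p$. As written, the fallback is not a proof, so rely on the citation, as the paper does.

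Second, for a tabular $L\notin\NExt\TL{K4}$ the set in (8) is empty, hence decidable. So restricting to transitive $L$ is necessary for the statement to be true; it is not a vacuous case with ``nothing to prove''.
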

\begin{proof}
    All properties in (1) - (7) are contained in $\mathsf{KC} \cup \mathsf{DEC}$. Since the logic $\Log(\bullet) = \TL{K4} \oplus \B\bot \wedge \bb\bot$ is tabular, it has all properties in (1) - (7). Also, it is clear that $\gf \not\models \B\bot \wedge \bb\bot$. So, applying Theorem~\ref{thm:reduction-K4t-P} with $\alpha = \B\bot \wedge \bb\bot$, we obtain the undecidability of (1) - (7). For (8), take any tabular tense logic $L$. It follows from \cite[Theorem 3.9]{Chen.Ma2024} that there exists a formula $\alpha$ such that $L = \TL{K4} \oplus \alpha$ (this also follows from a general fact in universal algebra proved by Birkhoff \cite{Birkhoff1935}; see also \cite[Theorem 5.27]{Bergman2012}), and every extension of $L$ is again tabular. Since $\Log(\gf)$ is non-tabular, we have $\gf \not\md \alpha$. By Theorem~\ref{thm:reduction-K4t-P}, (8) is undecidable. For (9), it suffices to show the complement, namely, the inconsistency is undecidable. This follows from Theorem~\ref{thm:reduction-K4t-P} by taking $\alpha = \bot$.
\end{proof}

\begin{remark}
    Since $\TL{K4} = \TL{K} \oplus \D\D p \to \D p$, it follows from Proposition~\ref{prop:finite-extension} that the properties mentioned in Corollary~\ref{Thm:undec-properties} are also undecidable in the lattice $\NExt\TL{K}$. 
\end{remark}

\begin{remark}
    Recall that consistency is decidable in $\NExt\K$ because the lattice $\NExt\K$ has exactly two coatoms, namely, maximal consistent logics. The result that consistency is undecidable in $\NExt\TL{K4}$ aligns with fact that there are $2^{\aleph_0}$ many coatoms in the lattice $\NExt\TL{K4}$ \cite{Chen.Ma2024}.
\end{remark}

\section{Conclusion} \label{Sec 4}

The main motivation behind this paper is to understand how interactions of modalities affect the decidability of logics' properties. We have shown that most properties, as listed in Corollary~\ref{Thm:undec-properties}, are undecidable in the lattice $\NExt\TL{K4}$. These results, together with the known ones (see Table~\ref{table:UndecidableProperties}), suggest that the interactions of modalities only make the decision problem for properties harder. We would conjecture that, for a property $P$ such as Kripke completeness, the finite model property, or decidability, if $P$ is undecidable in $\NExt L$ for a unimodal logic $L$, then it is also undecidable in $\NExt L_t$. Note that this does not follow directly from the minimal tense extension map $(\cdot)_t: \NExt\mathsf{K} \to \NExt\mathsf{K}_t$, as the map may not preserve or reflect these properties, as discussed in the introduction. 

Moreover, another observation from Table~\ref{table:UndecidableProperties} is that a stronger base logic tends to make a property decidable. The undecidability results for $\NExt\TL{K4}$ indicate that $\TL{K4}$ is not strong enough in this sense. We leave it for future research to analyze the decidability of logics' properties in the lattice of extensions of other, stronger or incomparable bimodal logics. For example, the tense logic $\TL{S4}$ of pre-orders, $\TL{Grz}$ of posets and the product logic $\ML{S5}\times\ML{S5}$ would be natural choices.

\section*{Acknowledgment}

The authors would like to thank Nick Bezhanishvili for his valuable comments on the draft of this paper. The authors are also grateful for the anonymous reviewers' comments, which significantly improved the presentation of the paper. The first author is supported by Tsinghua University's Initiative for Advancing First-Class and World-Leading Disciplines in the Humanities and Social Sciences. The second author was supported by the Student Exchange Support Program (Graduate Scholarship for Degree Seeking Students) of the Japan Student Services Organization.

\bibliographystyle{eptcs}
\bibliography{Reference-aiml}

\end{document}